\documentclass[letterpaper, 10pt, conference]{ieeeconf} 
\IEEEoverridecommandlockouts
\usepackage{graphicx}
\usepackage{url}
\usepackage{xcolor}
\usepackage{amsmath}
\usepackage{mathtools}
\usepackage{tikz}
\usepackage{verbatim}
\usepackage{caption}
\usepackage{subcaption}
\usepackage{pgfplots}
\pgfplotsset{compat=1.10}
\usepackage{graphicx}
\usepackage{epstopdf}
\usepackage{amssymb}
\usepackage{relsize}
\usepackage[textwidth=3.8em,textsize=scriptsize,disable]{todonotes}
\usepackage{algorithm}
\definecolor{Gray}{gray}{0.90}
\usepackage{colortbl}

 \usepackage{url}
 \usepackage{hyperref}
\usepackage{algorithm}
\usepackage{algorithmicx}
\usepackage{algcompatible}
\usepackage{algpseudocode}
\usepackage{color}
\usepackage{blkarray}
\usepackage{varwidth}
\usepackage{cite}
\usepackage{setspace}
\setstretch{0.98}

\newtheorem{theorem}{\bf{Theorem}}[section]

\newtheorem{prop}[theorem]{\bf{Proposition}}

\newenvironment{definition}[1][Definition]{\begin{trivlist}
\item[\hskip \labelsep {\bfseries #1}]}{\end{trivlist}}

\usepackage{amssymb}
\usepackage{url}

\newcommand{\calD}[0]{\mathcal{D}}

\newcommand{\calM}[0]{\mathcal{M}}

\newcommand{\calP}[0]{\mathcal{P}}
\newcommand{\calC}[0]{\mathcal{C}}



\newcommand\encircle[1]{%
  \tikz[baseline=(X.base)] 
    \node (X) [draw, shape=circle, inner sep=0] {\strut #1};}
\newtheorem{prob}{\textbf{Problem}}

\newtheorem{interpret*}{Interpretation:\\}


\setlength {\marginparwidth }{2cm}
\usepackage{setspace}
\usepackage{relsize, scalerel}
\setstretch{0.97}

\begin{document}
\title{Controllability Backbone in Networks}
\author{Obaid Ullah Ahmad, Waseem Abbas, 
 and Mudassir Shabbir
\thanks{Obaid Ullah Ahmad is with the Electrical Engineering Department at the University of Texas at Dallas, Richardson, TX. Email: Obaidullah.Ahmad@utdallas.edu.}
\thanks{Waseem~Abbas is with the Systems Engineering Department at the University of Texas at Dallas, Richardson, TX. Email: waseem.abbas@utdallas.edu.}
\thanks{Mudassir~Shabbir is with the Computer Science Department at the Vanderbilt University, Nashville, TN. Emails: mudassir.shabbir@vanderbilt.edu.}
}

\maketitle
\begin{abstract}
This paper studies the controllability backbone problem in dynamical networks defined over graphs. The main idea of the controllability backbone is to identify a small subset of edges in a given network such that any subnetwork containing those edges/links has at least the same network controllability as the original network while assuming the same set of input/leader vertices. We consider the strong structural controllability (SSC) in our work, which is useful but computationally challenging. Thus, we utilize two lower bounds on the network's SSC based on the zero forcing notion and graph distances. We provide algorithms to compute controllability backbones while preserving these lower bounds. We thoroughly analyze the proposed algorithms and compute the number of edges in the controllability backbones. Finally, we compare and numerically evaluate our methods on random graphs.
\end{abstract}

\begin{keywords}
Strong structural controllability, network control, zero forcing, graph distances.
\end{keywords}
\section{Introduction}
Network structure profoundly influences the dynamical behavior of networked multiagent systems. For instance, network controllability, connectivity, robustness to failures, information dissemination, and influence evolution in networks rely on the underlying network topology~\cite{mesbahi2010graph}. Therefore, any changes to the network's structural organization, such as adding or removing links between agents, may alter the system-level properties of the network, which could be either beneficial or detrimental. Thus, for a survivable network design and avoid the deterioration in the desired network behavior, a practical approach is to identify a sparse subnetwork (or backbone) whose maintenance would guarantee the preservation of the desired network property in the face of modifications. For example, to maintain connectivity, preserving edges in the minimum spanning tree ensures a path between every pair of agents. Similarly, in communication infrastructure networks, connected dominating sets are used to identify the minimum number of agents necessary to form the backbone network~\cite{yu2013connected}. 

This paper studies the \emph{controllability backbone} problem in a networked dynamical system defined over a graph $G=(V,E)$. Network controllability concerns the ability to manipulate the agents within a network as desired through external control signals injected via a subset of agents called \emph{input agents} or \emph{leaders}. The network controllability depends on the choice of leaders $V_\ell\subseteq V$ and the interconnections between agents~\cite{pasqualetti2014controllability,summers2015submodularity,becker2020network}. Moreover, the network controllability may deteriorate if the connections/edges between agents change~\cite{abbas2020tradeoff,chanekar2019network,mousavi2017structural,xiang2019advances}. The main idea of the controllability backbone is to determine a small subset of edges $E_B\subseteq E$ such that \emph{any} subnetwork of $G$ containing $E_B$ has at least the same network controllability as $G$ with the same leaders. In other words, maintaining $E_B$ implies that the minimum network controllability is preserved despite edge modifications.

We consider the \emph{strong structural controllability (SSC)} for the backbone problem. SSC is advantageous as it depends on the edge set $E$ and not on the edge weights (which represent the coupling strengths between vertices and often are not precisely known). However, determining the SSC of a network is a challenging computational problem~\cite{chapman2013strong,mousavi2017structural,shabbir2022computation}. So a typical approach is to obtain tight lower bounds. Therefore, we aim to identify a controllability backbone for a given network $G = (V,E)$ and leader set $V_\ell$, where the backbone preserves a tight lower bound on the network's SSC. As for the SSC lower bounds, we consider two widely used bounds based on the zero forcing sets and distances in graphs~\cite{monshizadeh2014zero,yaziciouglu2016graph,zhang2013upper}. Our main contributions are as follows:

\begin{enumerate}
\item We present a novel approach to identifying a sparse subgraph in a graph that guarantees the same level of controllability (SSC) as the original graph. We call this subgraph the \emph{controllability backbone} (Section~\ref{sec:prelim}).

\item We provide a polynomial algorithm to compute a minimum controllability backbone, which preserves a lower bound on the network's SSC based on zero forcing sets in graphs (Section~\ref{sec:zfs}). 

\item Additionally, we consider a distance-based lower bound on SSC and compute a controllability backbone preserving the distance bound. We derive tight bounds on the number of edges in the distance-based 
 backbone (Section~\ref{sec:dist}).

\item Finally, we illustrate our results and compare different controllability backbones (Section~\ref{sec:comp}). 
\end{enumerate}

There are previous works dealing with the densification problem, i.e., how can we add edges to a graph while maintaining its controllability (e.g., \cite{abbas2020improving,mousavi2020strong})? In contrast, this paper studies an inverse, i.e., the sparsification problem, to identify a small subset of crucial edges whose existence within any subgraph guarantees the same controllability as the original graph. While some studies have considered identifying edges whose removal from the graph does not deteriorate the network controllability of the remaining graph (e.g., \cite{mousavi2017robust,rahimian2013structural,jafari2010structural,pu2012robustness}), our problem setup is distinct. We require that \emph{any} subgraph containing the backbone edges be at least as controllable as the original graph, resulting in a more general problem formulation. Furthermore, our formulation considers the concept of strong structural controllability, which adds to its generality.

The rest of the paper is organized as follows: Section~\ref{sec:prelim} introduces preliminaries and sets up the controllability backbone problem. Section~\ref{sec:zfs} reviews the zero forcing ideas and then studies the zero forcing-based controllability backbone problem. Section~\ref{sec:dist} describes the distance-based bound on network SSC and then employs it to compute the distance-based backbone. Section~\ref{sec:comp} compares the two controllability backbones and numerically evaluate the proposed methods. Finally, Section~\ref{sec:conclusion} concludes the paper. 
\section{Preliminaries and Problem Formulation}
\label{sec:prelim}
\subsection{Notations and System}
An undirected graph $G=(V, E)$ models a multiagent network. The vertex set $V$, and the edge set $E\subset V\times V$ represent agents and interactions between them, respectively. The edge between vertices $u$ and $v$ is denoted by an unordered pair $(u,v)$. The \emph{neighborhood} of $u$ in graph $G$ is the set $\mathcal{N}_G(u) = \{v\in V:\; (u,v) \in E\}$ and the \emph{degree} of $u$ is $\deg(u) = |\mathcal{N}_G(u)|$. 
A \emph{path} $P$ in a graph $G$ is defined as a sequence of vertices $(v_1, v_2, v_3, \cdots, v_k)$, where $v_1, v_2, v_3, \cdots, v_k$ are distinct vertices in the graph, and for every $i$ from $1$ to $k-1$, there exists an edge between $v_i$ and $v_i+1$.
The \emph{distance} between vertices $u$ and $v$, denoted by $d(u,v)$, is the number of edges in the shortest path between $u$ and $v$. A graph $\hat{G} = (V, \hat{E})$ is a \emph{subgraph} of $G = (V, E)$, denoted by $\hat{G} \subseteq G$, if $\hat{E} \subseteq E$, and $G$ will be a super graph of $\hat{G}$.

We consider a network of $n$ agents, denoted by $V=\{v_1,v_2,\cdots,v_{n}\}$, of which $m$ are input/leader vertices, which are represented by $V_\ell=\{\ell_1,\ell_2,\cdots,\ell_{m}\}\subseteq V$, and the rest are followers. We consider the following liner time-invariant system on $G$.
\begin{equation}
\label{eq:system}
    \dot{x}(t) = Mx(t) + Hu(t).
\end{equation}
Here, $x(t)\in\mathbb{R}^{n}$ is the state vector and $u(t)\in\mathbb{R}^m$ is the external input injected into the system through $m$ leaders. $M\in\mathcal{M}(G)$ is the system matrix, where $\mathcal{M}(G)$ is a family of symmetric matrices associated with $G$ defined as:
\begin{equation}
\label{eq:mtx_family}
\begin{split}
\mathcal{M}(G) = \{M\in\mathbb{R}^{n\times n}\;& : \;M = M^\top, \text{ and for }i\ne j,\\
& M_{ij} \ne 0 \Leftrightarrow (i,j) \in E(G)\}.
\end{split}
\end{equation}
The matrix $H\in\mathbb{R}^{n\times m}$ in \eqref{eq:system} is the input matrix, such that $H_{ij}=1$, if $v_i = \ell_j$; and $0$ otherwise.
We note that the input matrix $H$ is defined by the selection of leader agents. Here, $\mathcal{M}(G)$ denotes a broad class of system matrices defined on graphs, including the adjacency, Laplacian, and signless Laplacian matrices. 

\subsection{Strong Structurally Controllable Networks} The system \eqref{eq:system} is \emph{controllable} if there exists an input $u(t)$ that can drive the system from an arbitrary initial state $x(t_0)$ to any desired state $x(t_f)$ in a finite amount of time. If the system is controllable for a given system and input matrices, we say that $(M,H)$ is a \emph{controllable pair}. Moreover, $(M,H)$ is a controllable pair if and only if the controllability matrix $\mathcal{C}(M,H)\in\mathbb{R}^{n\times nm}$ is full rank, i.e., $\texttt{rank}(\mathcal{C}(M,H)) = n$. The controllability matrix is defined as: 
\begin{equation}
\label{eq:Gamma}
\mathcal{C}(M,H) = \left[\begin{array}{lllll}
    H & MH & M^2H & \cdots & M^{n-1}H  \\
\end{array}\right].
\end{equation}

\begin{definition} (\emph{Strong Structural Controllability (SSC))} A graph $G=(V,E)$ with a given set of leaders $V_\ell\subseteq V$ (and the corresponding $H$ matrix) is \emph{strong structurally controllable} if and only if $(M,H)$ is a controllable pair \emph{for all} $M\in\mathcal{M}(G)$. 
\end{definition}
If the network $G$ is strong structurally controllable for a given set of leaders, then the rank of the controllability matrix does not depend on the edge weights (as long as they satisfy \eqref{eq:mtx_family}). For the rest of the paper, we refer to strong structural controllability simply as \emph{controllability}. The dimension of strong structurally controllable subspace, denoted by $\gamma(G, V_\ell)$, is the smallest possible rank of the controllability matrix under feasible weights.

\begin{definition}(\emph{Dimension of SSC)} For a fixed leader set $V_\ell$, the dimension of strong structurally controllable subspace, denoted by $\gamma(G, V_\ell)$, is the smallest possible rank of the controllability matrix over all $M\in \mathcal{M}(G)$, i.e.,
\begin{equation}
\label{eq:min_rank}
\gamma(G, V_\ell) =  \min_{M \in \calM(G)} (\texttt{rank}(\calC(M, H))).
\end{equation}    
\end{definition}

$\gamma(G, V_\ell)$ quantifies `how much' of the network $G$ can always be controlled through the leaders $V_\ell$.
\subsection{Controllability Backbone Problem}
\label{sec:Ctrb_backbone}
We are interested in identifying a small subset of edges among vertices within a network that would maintain its strong structural controllability in its subnetworks. This entails identifying the sparsest subgraph, referred to as the \emph{controllability backbone}, that guarantees at least the same level of controllability as the original network in any subnetwork that encompasses the controllability backbone. In essence, the controllability backbone represents the minimum structure that must be preserved within the network to ensure its minimum controllability despite structural perturbations.

\begin{definition} 
(\emph{Controllability Backbone $B$)} For a given $G = (V,E)$ and leaders $V_\ell\subseteq V$, the controllability backbone (or simply \emph{backbone}) $B = (V, E_B)$, is a subgraph of $G$ with $E_B\subseteq E$, such that any subgraph $\hat{G} = (V,\hat{E})$ containing $E_B$, i.e., $E_B\subseteq \hat{E}\subseteq E$ satisfies
\begin{equation}
\gamma(\hat{G}, V_\ell) \geq \gamma(G, V_\ell).
\end{equation}
\end{definition} 

In other words, any subgraph $\hat{G} = (V,\hat{E})$ of $G$ containing backbone edges $E_B$, has at least the same controllability as $G$. Thus, preserving backbone edges guarantees that controllability does not deteriorate in a subgraph $\hat{G}$. A backbone with the minimum edge set is referred to as the \emph{minimum backbone graph} $B^* = (V, E^*)$. We aim to compute $B^*$.

\vspace{2 mm}
\noindent\fbox{\begin{varwidth}{\dimexpr\linewidth-1\fboxsep-2\fboxrule\relax}
    \begin{prob}
    \label{prob:1}
        Given a graph $G = (V, E)$ and a leader set $V_\ell$, find the minimum controllability backbone graph.
    \end{prob}
\end{varwidth}}
\vspace{1 mm}

Figure~\ref{fig:backbone_illustration} illustrates the idea of a controllability backbone. For a given $G$ and $V_\ell = \{v_4,v_6,v_7\}$, the dimension of SSC is $\gamma(G,V_\ell)=8$. A minimum backbone $B^*$ is shown in Figure~\ref{fig:backbone_illustration}(b). Any subgraph $\hat{G}$ (of $G$) containing $B^*$ also has $\gamma(\hat{G},V_\ell) = 8$.

\begin{figure}[htb]
    \centering
    \begin{subfigure}[b]{0.15\textwidth}
	\centering
        \includegraphics[scale=0.41]{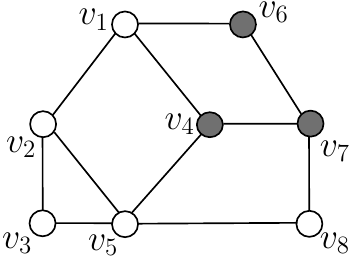}
	\caption{$G$}
	\end{subfigure}
	\begin{subfigure}[b]{0.15\textwidth}
	\centering
         \includegraphics[scale=0.41]{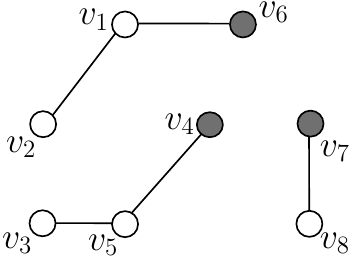}
	\caption{$B^*$}
	\end{subfigure}
	\begin{subfigure}[b]{0.15\textwidth}
	\centering
	\includegraphics[scale=0.41]{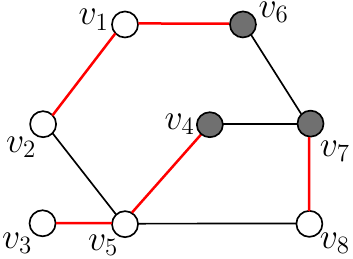}
	\caption{$\hat{G}$}
	\end{subfigure}
    \caption{(a) A graph $G$. (b) A minimum controllability backbone $B^*$ of $G$. (c) A subgraph $\hat{G}$ of $G$ containing the backbone (red edges).}
    \label{fig:backbone_illustration}
\end{figure}

The minimum backbone problem relies on the computation $\gamma(G,V_\ell)$ (as in \eqref{eq:min_rank}), which is a computationally arduous task. To address this challenge, it is common to compute tight lower bounds on $\gamma(G,V_\ell)$ instead of $\gamma(G,V_\ell)$ when dealing with SSC-related problems.
As a result, we also modify the controllability backbone problem and focus on obtaining a sparse subgraph of a given $G$ whose existence within any subgraph $\hat{G}\subseteq G$ guarantees that $\hat{G}$ has the same or greater value of lower bounds on the dimension of SSC as $G$. To accomplish this, we consider two widely used lower bounds, including (1) a zero forcing set-based bound and (2) a bound based on the distances between vertices. In the forthcoming sections, we will elaborate on these bounds and their application to the controllability backbone problem.

\section{Zero forcing for Controllability backbone}
\label{sec:zfs}
Zero forcing is a rule-based coloring of vertices in a graph. The main idea is to initiate the coloring process with a small subset of initially colored vertices which eventually color other vertices based on some rules. Zero forcing has several network applications and provides a tight lower bound on the network's SSC, as we explain below~\cite{monshizadeh2014zero}.

\subsection{Zero Forcing-based Lower Bound on SSC}
First, we define the zero forcing process and related terms and then explain the SSC bound based on the zero forcing phenomenon.
\label{subsec:ZFS}
\begin{definition} (\emph{Zero forcing (ZF) Process}) Consider a graph $G=(V,E)$, such that each $v\in V$ is colored either BLACK or WHITE initially. The ZF process is to iteratively change the color of WHITE vertices to BLACK using the following rule until no further color changes are possible.

\emph{Color change rule: If $v\in V$ is colored BLACK and has exactly one WHITE neighbor $u$, change the color of $u$ to BLACK.}
\end{definition}

We say that $v$ \emph{infected} $u$ if the color of WHITE vertex $u$ is changed to BLACK by some BLACK vertex $v$. 

\begin{definition}
\label{def:derived}(\emph{Derived Set})
Consider a graph $G = (V,E)$ with $V_\ell\subseteq V$ as the set of initial BLACK vertices. Then, the set of BLACK vertices obtained at the end of the ZF process is the derived set~\cite{work2008zero}, denoted by $dset(G, V_\ell)$, and $|dset(G, V_\ell)| = \zeta(G, V_\ell)$. When the context is clear, we will drop the parameter $V_\ell$.
\end{definition}

The set of initial BLACK vertices $V_\ell$ is also referred to as the \emph{input or leader set}. For a given $V_\ell$, $dset(G, V_\ell)$ is unique \cite{work2008zero}. Now, we define the zero forcing set.

\begin{definition} (\emph{Zero Forcing Set (ZFS)}) For a graph $G=(V,E)$, $V_\ell\subseteq V$ is a ZFS if and only if $dset(G,V_\ell) = V$. We denote a ZFS of $G$ by $Z(G)$. 
\end{definition}
Figure~\ref{fig:ZFS} illustrates zero forcing through a set of input vertices and the corresponding derived set. 

\begin{figure}[ht]
	\centering
	\includegraphics[scale=0.425]{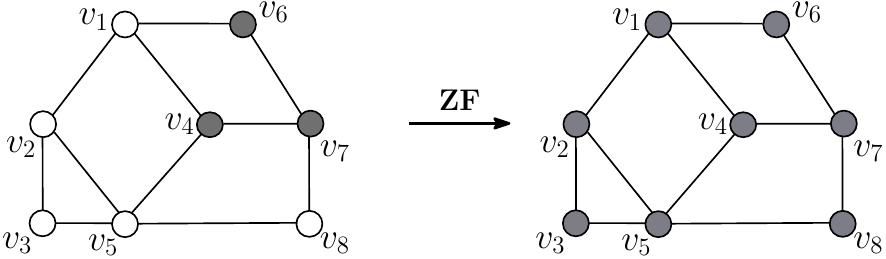}
    \caption{$V_\ell = \{v_4,v_6,v_7\}$ is the 
    input set. After the ZF process, $dset(G, V_\ell) =V$, as indicated by the black vertices. Hence, $V_\ell$ is a ZFS.}
    \label{fig:ZFS}
\end{figure}

The zero forcing phenomenon is significant in characterizing the network's SSC~\cite{monshizadeh2014zero,trefois2015zero,yaziciouglu2022strong}. In particular, the size of the derived set for a given set of input vertices provides a lower bound on the dimension of SSC.

\begin{theorem}
\cite{monshizadeh2015strong} For any network $G = (V, E)$ with the leaders $V_\ell \subseteq V $,
$$
\zeta(G, V_\ell) \leq \gamma(G, V_\ell),
$$    
where $\zeta(G,V_\ell)$ is the size of the derived set with $V_\ell$ as input vertices, and $\gamma(G,V_\ell)$ is the dimension of SSC (as in \eqref{eq:min_rank}).
\end{theorem}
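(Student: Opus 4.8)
The plan is to reduce the statement to a per-matrix rank bound and then track the zero forcing process through the column space of the controllability matrix. Since $\gamma(G,V_\ell) = \min_{M \in \calM(G)} \texttt{rank}(\calC(M,H))$, it suffices to prove that $\texttt{rank}(\calC(M,H)) \geq \zeta(G,V_\ell)$ for \emph{every} fixed $M \in \calM(G)$. Fixing such an $M$, I would work with the subspace $\calR := \operatorname{colspace}(\calC(M,H)) = \operatorname{span}\{M^k H e_j\}$ and aim to show it contains the standard basis vector $e_w$ for every vertex $w$ in the derived set $dset(G,V_\ell)$; this immediately yields $\texttt{rank}(\calC(M,H)) = \dim \calR \geq |dset(G,V_\ell)| = \zeta(G,V_\ell)$.

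The engine of the proof is an induction on the zero forcing steps with the invariant: after $t$ color changes, with $C_t$ the current set of BLACK vertices, $\operatorname{span}\{e_w : w \in C_t\} \subseteq \calR$. The base case $C_0 = V_\ell$ holds because $\operatorname{span}\{e_{\ell_1},\dots,e_{\ell_m}\}$ is exactly the column space of $H$, which sits inside $\calR$. For the inductive step, suppose $v \in C_t$ forces a WHITE vertex $u$, so that by the color change rule $u$ is the unique WHITE neighbor of $v$ and every other vertex of $\mathcal{N}_G(v)$ already lies in $C_t$. Expanding the $v$-th column of $M$ gives $M e_v = M_{vv} e_v + M_{uv} e_u + \sum_{w \in \mathcal{N}_G(v),\, w \neq u} M_{wv} e_w$, where every term other than $M_{uv} e_u$ is a multiple of a basis vector indexed by a vertex in $C_t$, hence lies in $\calR$ by the inductive hypothesis.

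The one remaining ingredient is that $\calR$ is $M$-invariant, which follows from Cayley--Hamilton: $M(M^k H e_j) = M^{k+1} H e_j$ is a column of $\calC(M,H)$ for $k \leq n-2$, while $M^n H e_j$ is a linear combination of lower powers. Since $e_v \in \calR$ by the inductive hypothesis, $M e_v \in \calR$, and subtracting the in-$\calR$ terms above isolates $M_{uv} e_u \in \calR$. Here is where the strong-structural hypothesis is essential: because $(u,v) \in E$, the definition of $\calM(G)$ forces $M_{uv} \neq 0$ for \emph{all} admissible $M$, so there is no cancellation and we may divide to conclude $e_u \in \calR$, giving $\operatorname{span}\{e_w : w \in C_{t+1}\} \subseteq \calR$. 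Iterating to the end of the forcing process proves the claim.

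The main obstacle to watch is exactly this last point --- guaranteeing that the coefficient multiplying the newly colored vertex never vanishes regardless of the edge weights, which is what distinguishes the strong-structural setting from a single generic realization. The choice of invariant (placing the entire basis vector of each BLACK vertex into the column space, rather than tracking specific linearly independent columns) is what keeps the bookkeeping clean and sidesteps any need to reason about competing paths or cancellation among multiple forcing chains.
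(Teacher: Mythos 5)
Your proof is correct and follows essentially the same route as the paper: the paper's proof is a one-line appeal to Lemma 4.2 of \cite{monshizadeh2015strong}, which asserts precisely what your induction establishes --- that for every $M \in \calM(G)$ the controllable subspace $\operatorname{colspace}(\calC(M,H))$ contains a $|dset(G, V_\ell)|$-dimensional subspace, namely $\operatorname{span}\{e_w : w \in dset(G, V_\ell)\}$. The only difference is that you supply the full argument behind that citation (base case from the columns of $H$, $M$-invariance via Cayley--Hamilton, and the non-vanishing of $M_{uv}$ for every admissible $M$ at each forcing step), so your write-up is a correct, self-contained version of the proof the paper outsources to the reference.
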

\begin{proof} Proof follows from Lemma 4.2 in \cite{monshizadeh2015strong}, which shows that for a set of state matrices, the controllable subspace always
contains a $|dset(G, V_\ell)|$-dimensional subspace.
\end{proof}

\subsection{ZFS-based Backbone}
We are interested in finding a controllability backbone that will maintain the zero forcing bound $\zeta(G,V_\ell)$ for a given leader set $V_\ell$. The idea is to identify a subset of edges $E_{B_Z}$ in a given $G = (V,E)$ with a leader set $V_\ell$ such that the ZFS-based controllability bound in \emph{any} subgraph of $\hat{G} = (V,\hat{E})$ containing those edges, (i.e., $E_{B_Z}\subseteq \hat{E}$) is preserved. We formally define the ZFS-based backbone as follows:

\begin{definition} (\emph{ZFS-based Backbone}) Given a graph $G = (V, E)$ and a leader set $V_\ell$, the ZFS-based backbone is a subgraph $B_z = (V, E_{B_z})$, such that any subgraph $\hat{G} = (V,\hat{E})$, where $E_{B_Z}\subseteq \hat{E}\subseteq E$ satisfies the following:
 \begin{equation*}
            \zeta(\hat{G}, V_\ell) \ge \zeta(G, V_\ell).
        \end{equation*}    
\end{definition}

Thus, the dimension of SSC in any subgraph of $G$ containing the ZFS-based backbone is at least $\zeta(G,V_\ell)$, or in other words, $\gamma(\hat{G},V_\ell)\ge \zeta(G,V_\ell)$. Our goal is to find the ZFS-based backbone with the minimum number of edges. 

\vspace{2 mm}
\noindent\fbox{\begin{varwidth}{\dimexpr\linewidth-1\fboxsep-2\fboxrule\relax}
    \begin{prob}
    \label{Prob:ZFS}
        Given a graph $G = (V, E)$ and a leader set $V_\ell$, find a minimum ZFS-based backbone.
    \end{prob}
\end{varwidth}}
\vspace{2 mm}

In \cite{monshizadeh2014zero}, authors show that a leader set $V_\ell$ renders the network strong structurally controllable if and only if $V_\ell$ is a zero forcing set (ZFS) of the network graph $G$. Thus, the ZFS-based backbone is essentially the controllability backbone (as defined in Section~\ref{sec:Ctrb_backbone}) if $V_\ell$ is a ZFS of $G$. Algorithm~\ref{alg:backbone_zfs} solves Problem~\ref{Prob:ZFS} and computes a minimum ZFS-based backbone. The main idea is to run the ZF process and iteratively select an edge through which some BLACK vertex colors its WHITE neighbor (thus, increasing the size of the derived set). 

\begin{algorithm}[ht]
\caption{Computing ZF-based controllability backbone}
\label{alg:backbone_zfs}
    \begin{algorithmic}[1]
    \renewcommand{\algorithmicrequire}{\textbf{Input:}}
    \renewcommand{\algorithmicensure}{\textbf{Output:}}
    \Require $G$, $V_\ell$
    \Ensure ZFS-based backbone $B_z = (V, E_{B_z})$\\
        \textbf{Initialize:} $E_{B_z} \gets \emptyset$,\quad $dset(G, V_\ell) \gets V_\ell$ \texttt{(The  set of initial BLACK vertices)} 
        \While {there exits a BLACK vertex $v$ with exactly one WHITE neighbor $u$}\\
            \;\;\;\;\;$dset(G, V_\ell) \gets dset(G, V_\ell) \cup \{u\}$\\
            \;\;\;\;\;$E_{B_z} \gets E_{B_z} \cup \{(u, v)\}$
        \EndWhile
    \end{algorithmic}
\end{algorithm}

\begin{theorem}
\label{thm:zfs_backbone}
Consider a graph $G=(V,E)$ and a leader set $V_\ell\subseteq V$, where $|V| = n$ and $|V_\ell| = m$. Algorithm~\ref{alg:backbone_zfs} returns a minimum ZFS-based backbone $B_{z} = (V,E_{B_z})$ in $O(n^2)$ time. Moreover, $|E_{B_z}| = \zeta(G,V_\ell) - m$. 
\end{theorem}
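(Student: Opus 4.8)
The plan is to verify three things about Algorithm~\ref{alg:backbone_zfs}: the edge count $|E_{B_z}| = \zeta(G,V_\ell)-m$, the correctness of $B_z$ as a valid ZFS-based backbone, and its minimality, and then to bound the running time. I would establish the edge count and correctness together by examining the forcing sequence the algorithm records. Each pass through the while loop colors exactly one new vertex $u$ and appends exactly one edge $(u,v)$; since every vertex is colored at most once, these edges are distinct, and the loop halts precisely when the derived set stops growing, i.e., after $\zeta(G,V_\ell)-m$ iterations, which immediately gives $|E_{B_z}|=\zeta(G,V_\ell)-m$. For correctness, let $(u_1,v_1),(u_2,v_2),\dots$ be the forcing edges in the recorded order, and take any $\hat{G}=(V,\hat{E})$ with $E_{B_z}\subseteq\hat{E}\subseteq E$. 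I would replay this exact sequence in $\hat{G}$ and argue by induction that at step $i$ the set of black vertices in $\hat{G}$ equals $V_\ell\cup\{u_1,\dots,u_{i-1}\}$, exactly as in $G$. The key monotonicity observation is that the neighborhood of $v_i$ in $\hat{G}$ is a subset of its neighborhood in $G$; since in $G$ the only white neighbor of $v_i$ at step $i$ was $u_i$, and the edge $(u_i,v_i)$ survives in $\hat{E}$, the vertex $v_i$ has exactly one white neighbor $u_i$ in $\hat{G}$ as well. Hence every recorded force remains valid in $\hat{G}$, so the replay colors all of $dset(G,V_\ell)$ and yields $\zeta(\hat{G},V_\ell)\ge\zeta(G,V_\ell)$.

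The crux is minimality. Let $B'=(V,E')$ be any ZFS-based backbone, and apply the backbone property to the subgraph $\hat{G}=(V,E')$ itself, which gives $\zeta((V,E'),V_\ell)\ge\zeta(G,V_\ell)$. I would then run the zero forcing process on $(V,E')$ and observe that it performs exactly $\zeta((V,E'),V_\ell)-m$ forcing steps, each coloring one new vertex. The essential structural fact is that within a single execution each edge can trigger at most one force: once both its endpoints are black it is never used again. Thus the number of forcing steps is a lower bound on $|E'|$, and combining the inequalities gives $|E'|\ge\zeta((V,E'),V_\ell)-m\ge\zeta(G,V_\ell)-m=|E_{B_z}|$, so $B_z$ has the fewest possible edges. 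I expect this counting argument --- in particular, justifying that each edge contributes at most one force --- to be the main obstacle, since it is where the edges of an \emph{arbitrary} backbone must be related to the number of forces its own zero forcing process is forced to perform.

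Finally, for the running time I would give an implementation that maintains, for each vertex, a counter of its white neighbors together with a worklist of black vertices whose counter equals one. Initializing the counters costs $O(n+|E|)$; each force decrements the counters of the neighbors of the newly blackened vertex and updates the worklist, for a total of $\sum_u \deg(u)=O(|E|)$ work across the entire run. Since $|E|=O(n^2)$, the algorithm terminates in $O(n^2)$ time, completing the proof.
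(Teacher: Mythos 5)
Your proposal is correct, and its core follows the same route as the paper's proof: the edge count comes from the one-edge-per-force bookkeeping, and validity of $B_z$ as a backbone is shown by replaying the recorded forcing sequence in any $\hat{G}$ with $E_{B_z}\subseteq\hat{E}\subseteq E$, using the monotonicity $\mathcal{N}_{\hat{G}}(v)\setminus\{u\}\subseteq\mathcal{N}_{G}(v)\setminus\{u\}$ to see that each recorded force stays legal. Where you genuinely diverge is on minimality: the paper's proof establishes only that the returned graph is a valid backbone with $\zeta(G,V_\ell)-m$ edges and does not argue a lower bound on the size of an \emph{arbitrary} ZFS-based backbone, so the word ``minimum'' is left unsupported there. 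Your argument closes exactly this gap: given any backbone $B'=(V,E')$, you apply the backbone property to the subgraph $(V,E')$ itself to get $\zeta((V,E'),V_\ell)\ge\zeta(G,V_\ell)$, then observe that the zero forcing process on $(V,E')$ performs $\zeta((V,E'),V_\ell)-m$ forces and that each edge can trigger at most one force (after a force both endpoints are BLACK, and forces act only on WHITE vertices), hence $|E'|\ge\zeta((V,E'),V_\ell)-m\ge\zeta(G,V_\ell)-m=|E_{B_z}|$. This counting argument is sound and is precisely the piece missing from the paper's write-up; your counter-based $O(n+|E|)=O(n^2)$ implementation is likewise a slightly sharper justification of the runtime than the paper's appeal to the cost of computing a derived set.
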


\begin{proof}
First, we show that the size of the graph returned by Algorithm~\ref{alg:backbone_zfs}, i.e., $|E_{B_z}|$, is $ |dset(G, V_\ell)| - m$. We start with an empty graph and $dset(G, V_\ell)$ contains only the leader vertices. Every time we add an edge to the output graph, we include a vertex to $dset(G, V_\ell)$. Therefore, the number of edges in the graph is the size of the final $dset(G, V_\ell)$ minus the number of leaders.\\
We need to show that the graph returned is indeed a ZFS-based backbone graph. We prove this by showing that for every graph $\hat{G}$ with $B_z\subseteq \hat{G} \subseteq G$, the size of the derived set $\zeta(\hat{G}, V_\ell)$ is more than or equal to $\zeta(G, V_\ell)$, i.e., the size of the derived set of the original graph $G$ for a given leader set $V_\ell$. We propose to copy the zero forcing process as it is done on the graph $G$. At an arbitrary step of this process, a BLACK colored vertex $v$ forces the color of a WHITE color neighbor $u$ to BLACK. The edge $(u,v)$ is preserved in the graph $B_z$ and in every super graph $\hat{G}$ of $B_z$, therefore, $u$ must be a WHITE neighbor of a BLACK colored vertex $v$ in graph $\hat{G}$ at this step of zero forcing process. To complete a valid step, though, we need $u$ to be the only WHITE neighbor of $v$. At this step in the original graph $G$, all vertices in $\mathcal{N}_G(v)\setminus {u}$ are colored BLACK. As $\hat{G}\subseteq G$, we also have $(\mathcal{N}_{\hat{G}}(v)\setminus {u}) \subseteq (\mathcal{N}_G(v)\setminus {u})$. Thus all neighbors of $v$ in $\hat{G}$ except $u$ must be colored BLACK at this step. Therefore, the zero forcing process can be completed for every graph $\hat{G}$ with $B_z\subseteq \hat{G} \subset G$ and the graph returned by the algorithm is indeed a ZFS-based backbone of $G$, i.e., $\zeta(\hat{G}, V_\ell) \geq \zeta(G, V_\ell)$.
In particular, $|dset(G, V_\ell)| = \zeta(B_z,V_\ell) = \zeta(G,V_\ell)$. We have the size of the backbone graph as in the statement, $|E_{B_z}| = |dset(G, V_\ell)| - m = \zeta(G, V_\ell) - m$.\\
Regarding the time complexity of the algorithm, we only run the zero forcing process once, therefore, the time complexity of Algorithm~\ref{alg:backbone_zfs} is bounded by the time complexity of computing the derived set of a graph with a given leader set, which is $O(n^2)$.
This concludes the proof.
\end{proof}

\begin{figure}[ht]
    \centering
     \begin{subfigure}{0.49\linewidth}
         \includegraphics[width=0.8\linewidth]{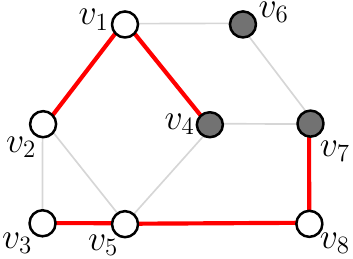}
     \end{subfigure}
     \begin{subfigure}{0.49\linewidth}
         \includegraphics[width=0.8\linewidth]{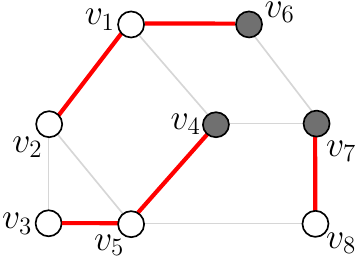}
     \end{subfigure}
    \caption{Two distinct ZFS-based backbones of $G$ in Figure \ref{fig:ZFS}.}    
    \label{fig:backbone_zfs}
\end{figure}

Backbone obtained by the ZFS method is not necessarily unique. For instance, Figure \ref{fig:backbone_zfs} illustrates two distinct ZFS-based backbones of $G$ (in Figure \ref{fig:ZFS}) and $V_\ell = \{v_4,v_6,v_7\}$. Though there can be multiple distinct backbones, they all have the same number of edges (as in Theorem~\ref{thm:zfs_backbone}). We characterize the edges in any ZFS-backbone into two categories,  \emph{necessary} and \emph{contingent} edges. The \emph{necessary} edges are the ones that must be included in \emph{every} ZFS-backbone $B_z$ of $G$ for a given $V_\ell$, whereas all the other edges of the backbone are the \emph{contingent} edges. In the example of Figure~\ref{fig:backbone_zfs}, edge $(v_1, v_2)$ is the necessary edge. For a given $G$ and $V_\ell$, a simple characterization of necessary edges is that if they are removed from $G$, then the size of the derived set is reduced. In other words, if $e$ is a necessary edge and $G\setminus e$ denotes a graph obtained from $G$ by removing the edge $e$, then $dset(G,V_\ell) > dset (G\setminus e, V_\ell)$. 

Additionally, we note that when $V_\ell$ is a ZFS of $G$ (i.e., $dset(G,V_\ell) = V$), then $\zeta(G,V_\ell) = \gamma(G,V_\ell) = |V|$, implying that a minimum ZFS-based backbone is also a minimum controllability backbone (as in Problem~\ref{prob:1}). However, when the leader set $V_\ell$ is not a ZFS, then the distance-based bound on the dimension of SSC is typically better than the ZFS-based bound~\cite{yaziciouglu2022strong}. Next, we discuss the distance-based bound and apply it to the controllability backbone problem.

\section{Graph Distances for Controllability backbone}
\label{sec:dist}

In this section, we design a controllability backbone using a bound on the network SSC based on the graph distances between vertices in the underlying network graph. First, we introduce the distance-based bound on the dimension of SSC~\cite{yaziciouglu2016graph}. We then frame the notion of the distance-based controllability backbone and provide an algorithm to compute such a backbone. In Section~\ref{sec:comp}, we compare the ZFS-based and distance-based backbones.

\subsection{Distance-based Lower Bound}
Assuming $m$ leaders $V_\ell = \{\ell_1, \ell_2, \cdots , \ell_m\}$ in a leader-follower network $G = (V,E)$, we define the \emph{distance-to-leader (DL) vector} for each $v_i \in V$ as
$$
D_i = \left[\begin{array}{lllll}
    d(\ell_1, v_i) & d(\ell_2, v_i) & \cdots & d(\ell_{m}, v_i)  \\
\end{array}\right]^T \in \mathbb{Z}^m.
$$

The $j^{th}$ component of $D_i$, denoted by $[D_i]_j$, is $d(\ell_j, v_i)$, i.e., the distance between leader $\ell_j$ and vertex $v_i$. Figure \ref{fig:PMI_example} shows DL vectors of vertices in a graph $G$ with leaders $V_\ell = \{v_4, v_6\}$. Next, we define a \emph{sequence} of distance-to-leader vectors, called \emph{pseudo-monotonically increasing sequence}~\cite{yaziciouglu2016graph}.

\begin{definition} (\emph{Pseudo-monotonically Increasing Sequence (PMI))} A sequence $\calD = [\calD_1 \; \; \calD_2 \;\; \cdots \;\calD_k]$ of distance-to-leader vectors is a PMI if for any vector $\calD_i$ in the sequence, there is some coordinate $\pi(i)\in \{1, 2,\cdots, m\}$ such that
\begin{equation}
\label{eq:PMI}
    [\calD_i]_{\pi(i)} < [\calD_j]_{\pi(i)}, \; \forall j > i. 
    \end{equation}
\end{definition} 
We say $[\calD_i]_{\pi(i)}$ satisfies the PMI property at coordinate $\pi(i)$.

The PMI property \eqref{eq:PMI} essentially gurantees that for each vector $\mathcal{D}_i$ in the PMI sequence, there is some index/coordinate $\pi(i)$ such that the values of all the subsequent vectors at the coordinate $\pi(i)$ are strictly greater than $[\mathcal{D}_i]_{\pi(i)}$. An example of PMI sequence of six vectors is shown in \eqref{eqn:PMI_seq}, where the coordinates of circled values are the ones where the PMI property is satisfied. 

Next, we note that each vector in a PMI sequence is a DL vector of some vertex in the graph. However, multiple vertices can have the same DL vectors. For example, the DL vectors of $v_2$ and $v_8$ are the same. Thus, to explicitly specify the vertex whose DL vector appears in the PMI sequence, we introduce the \emph{distance-to-leader} mapping.

\begin{definition}
\label{def:dlm}
(\emph{Distance-to-Leader Mapping (DLM))} Let $\mathcal{D}$ be a PMI sequence. For each $\calD_i\in\calD$, a Distance-to-Leader Mapping (DLM), denoted by $f(\calD_i)$, is a vertex whose DL vector is $\calD_i$, i.e., $\calD_i = D_{f(\calD_i)}$. 
\end{definition}

To further clarify, note the following notations:

$\mathcal{D}_i$: \hspace{0.5cm} $i^{th}$ vector in the PMI sequence,

$D_v$: \hspace{0.5cm} DL vector of vertex $v$.

Figure~\ref{fig:PMI_example} illustrates these ideas. For the graph $G$ in Figure~\ref{fig:PMI_example} and $V_\ell = \{v_4, v_6\}$, a PMI sequence of length six can be constructed as
\begin{equation}
\label{eqn:PMI_seq}
    \mathcal{D} = \left[
        \begin{bmatrix}
            \encircle{0}\\2 \end{bmatrix},
        \begin{bmatrix}
            2\\\encircle{0} \end{bmatrix}, 
        \begin{bmatrix}
            1\\\encircle{1} \end{bmatrix},
        \begin{bmatrix}
            \encircle{1}\\3 \end{bmatrix},
        \begin{bmatrix}
            2\\\encircle{2} \end{bmatrix},
        \begin{bmatrix}
            \encircle{2}\\3 \end{bmatrix}
    \right].
\end{equation}

\begin{figure}[ht]
    \centering
    \includegraphics[scale = 0.55]{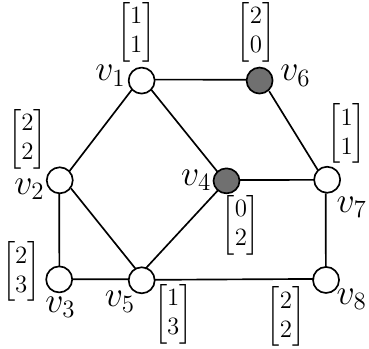}
    \caption{A network with two leaders $V_\ell = \{\ell_1, \ell_2\} = \{v_4, v_6\}$, along with the DL vectors of vertices. A PMI sequence of length six is $\calD = [\calD_1\; \calD_2\; \cdots\; D_6] = [D_{v_4}\; D_{v_6}\; D_{v_7}\; D_{v_5}\; D_{v_8}\; D_{v_3}]$.}
    \label{fig:PMI_example}
\end{figure}

A PMI sequence of DL vectors is related to the network SSC. In fact, the length of PMI sequence provides a tight lower bound on the dimension of SSC $\gamma(G,V_\ell)$, as stated in the following result.

\begin{theorem}~\cite{yaziciouglu2016graph}
    If $\delta(G, V_\ell)$ or simply $\delta(G)$, is the length of the longest PMI sequence of distance-to-leader vectors in a network $G = (V, E)$ with $V_\ell$ leaders, then 
    \begin{equation}
        \label{eq:distance_bound}
    \delta(G, V_\ell) \leq \gamma(G, V_\ell),
    \end{equation}
    where $\gamma(G,V_\ell)$ is the dimension of SSC (as in \eqref{eq:min_rank}).

\end{theorem}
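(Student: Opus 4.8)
The plan is to fix an arbitrary $M\in\calM(G)$ and exhibit $\delta(G,V_\ell)$ linearly independent columns of the controllability matrix $\calC(M,H)$; since such a collection exists for every admissible $M$, it lower-bounds the minimum rank, giving $\gamma(G,V_\ell)\ge\delta(G,V_\ell)$. Let $\calD=[\calD_1\;\calD_2\;\cdots\;\calD_\delta]$ be a longest PMI sequence, with representative vertices $f(\calD_i)$ (Definition~\ref{def:dlm}) and witnessing coordinates $\pi(i)$. For each $i$ I set $t_i=[\calD_i]_{\pi(i)}=d(\ell_{\pi(i)},f(\calD_i))$ and consider the vector $w_i=M^{t_i}He_{\pi(i)}=M^{t_i}e_{\ell_{\pi(i)}}$, where $e_{\pi(i)}$ selects the $\pi(i)$-th leader so that $He_{\pi(i)}$ is the indicator of $\ell_{\pi(i)}$. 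Because $t_i\le n-1$, each $w_i$ is genuinely a column of $\calC(M,H)$, so it suffices to prove that $w_1,\dots,w_\delta$ are linearly independent.

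The main tool is the walk interpretation of matrix powers: for any $u,w$ the entry $(M^{t})_{u,w}$ equals the sum, over all length-$t$ walks from $w$ to $u$, of the products of the traversed edge weights, and in particular $(M^{t})_{u,w}=0$ whenever $t<d(u,w)$. I would then form the $\delta\times\delta$ matrix $S$ obtained by reading each $w_i$ only at the rows indexed by the representatives, i.e. $S_{i',i}=(w_i)_{f(\calD_{i'})}=(M^{t_i})_{f(\calD_{i'}),\,\ell_{\pi(i)}}$. For $i'>i$ the PMI property~\eqref{eq:PMI} gives $[\calD_{i'}]_{\pi(i)}>[\calD_i]_{\pi(i)}=t_i$, that is $d(\ell_{\pi(i)},f(\calD_{i'}))>t_i$, so the vanishing fact forces $S_{i',i}=0$. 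Hence $S$ is triangular and $\det S=\prod_{i=1}^{\delta}S_{ii}$, with the pivots $S_{ii}=(M^{t_i})_{f(\calD_i),\ell_{\pi(i)}}$.

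It remains to show that every diagonal pivot $S_{ii}$ is nonzero, and this is where I expect the real difficulty to lie. Since $t_i$ is exactly the distance $d(\ell_{\pi(i)},f(\calD_i))$, the only length-$t_i$ walks are geodesics, so $S_{ii}$ is the sum of the weight-products over all shortest $\ell_{\pi(i)}$--$f(\calD_i)$ paths. When that shortest path is unique, as on trees, $S_{ii}$ is a single nonzero product and independence is immediate. The obstacle is the general case: when several geodesics exist, the sum is a genuine polynomial in the edge weights, and an adversarial (but admissible, hence nonzero) weighting could in principle cancel it. Because the statement concerns the \emph{minimum} rank over all $M\in\calM(G)$, mere generic nonvanishing of $\det S$ is not sufficient; one must certify that the pivots cannot vanish for \emph{any} admissible weighting. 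I would address this by invoking the finer combinatorial argument underlying the PMI bound in~\cite{yaziciouglu2016graph}, ordering the representatives by increasing $t_i$ and arguing inductively (effectively a distance-layered elimination on $S$) that each newly selected column contributes a pivot that is nonzero for every weighting. Establishing this weight-independent nonvanishing of the pivots is the crux; once it is in hand, $\det S\neq 0$ yields $\delta(G,V_\ell)$ independent columns of $\calC(M,H)$ and hence $\gamma(G,V_\ell)\ge\delta(G,V_\ell)$.
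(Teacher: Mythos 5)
Your skeleton is the right one---it is essentially the argument behind the cited result: order the PMI representatives, take the columns $M^{t_i}He_{\pi(i)}$, evaluate them at the rows indexed by $f(\calD_{i'})$, use $(M^{t})_{u,w}=0$ for $t<d(u,w)$ to make this test matrix triangular, and reduce everything to the nonvanishing of the diagonal pivots $(M^{t_i})_{f(\calD_i),\ell_{\pi(i)}}$, each a sum over geodesics of products of edge weights. (Note that the paper itself gives no proof of this theorem; it only cites \cite{yaziciouglu2016graph}, so the relevant comparison is with that source.) However, the step you defer---``weight-independent nonvanishing of the pivots''---is not a technical detail recoverable by a distance-layered inductive elimination: over the family $\calM(G)$ as defined in \eqref{eq:mtx_family}, where off-diagonal entries may take arbitrary signs, that claim is false, and in fact the inequality $\delta(G,V_\ell)\le\gamma(G,V_\ell)$ itself fails.

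Concretely, take the $4$-cycle $V=\{\ell,a,b,w\}$ with edges $(\ell,a),(\ell,b),(a,w),(b,w)$ and single leader $\ell$. The distances $0,1,2$ give $\delta(G,\{\ell\})=3$. Now choose $M\in\calM(G)$ with zero diagonal and weights $M_{\ell a}=M_{\ell b}=M_{aw}=1$, $M_{bw}=-1$. The two geodesic products from $\ell$ to $w$ cancel ($1\cdot 1 + 1\cdot(-1)=0$), and indeed $M^{2}e_{\ell}=2e_{\ell}$, so the Krylov space of $(M,e_\ell)$ is $\mathrm{span}\{e_{\ell},\,e_{a}+e_{b}\}$ and $\texttt{rank}(\calC(M,H))=2<3$. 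Hence no argument can certify your pivots for every $M\in\calM(G)$; the bound holds only under a sign restriction on the admissible weights. That is precisely what the cited proof uses: \cite{yaziciouglu2016graph} establishes the PMI bound for (positively weighted) Laplacian dynamics, where every geodesic term in a pivot is a product of positive numbers, so the pivots are automatically nonzero---your ``unique shortest path'' observation generalizes to ``all geodesic terms have the same sign,'' not to arbitrary admissible weightings. So your proof is incomplete exactly at the point you flagged, and closing the gap requires restricting the matrix class (reading the theorem, as the original reference does, over a sign-definite family), not a finer combinatorial argument; as stated for the family \eqref{eq:mtx_family}, the theorem itself inherits this issue.
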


\subsection{Distance-based Backbone}
Here, we will use the distance-based bound to formulate a controllability backbone problem. Then, we will provide and analyze an algorithm for computing such a backbone.

\begin{definition} (\emph{Distance-based Backbone}) Given a graph $G = (V, E)$ and a leader set $V_\ell$, the distance-based backbone is a subgraph $B_d = (V, E_{B_d})$ such that any subgraph $\hat{G} = (V,\hat{E})$, where $E_{B_d}\subseteq \hat{E}\subseteq E$ satisfies the following:
 \begin{equation*}
            \delta(\hat{G}, V_\ell) \geq \delta(G, V_\ell).
        \end{equation*}    
\end{definition}
It basically means that any subgraph $\hat{G}$ containing backbone edges $E_{B_d}$ has the longest PMI sequence with $V_\ell$ leaders of length more or equal to the longest PMI sequence as in $G$, and thus, has at least the same controllability bound as in $G$. As a result, any $\hat{G}$ containing the backbone edges satisfies $\delta(G,V_\ell) \le \gamma(\hat{G},V_\ell)$. So, our goal is to find the minimum distance-based backbone.

\vspace{2 mm}
\noindent\fbox{\begin{varwidth}{\dimexpr\linewidth-1\fboxsep-2\fboxrule\relax}
    \begin{prob}
        Given a graph $G = (V, E)$ and a leader set $V_\ell$, find a minimum distance-based backbone.
    \end{prob}
\end{varwidth}}
\vspace{2 mm}

Algorithm~\ref{alg:backbone_distance} presents a scheme to compute a minimal distance-based backbone of a given $G=(V, E)$ and leader set $V_\ell = \{\ell_1,\ell_2,\cdots,\ell_m\}$. The input to the algorithm is a PMI sequence $\mathcal{D} = \left[\begin{array}{cccccc}\mathcal{D}_1 & \mathcal{D}_2&\cdots& \mathcal{D}_{\delta}\end{array}\right]$ of length $\delta(G,V_\ell) = \delta$. Additionally, we also know the corresponding distance-to-leader mapping $f(\mathcal{D}_i)$ for each $\mathcal{D}_i\in \mathcal{D}$ (i.e., vertices whose DL vectors appear in $\mathcal{D}$). We note that PMI sequence and DLM can be easily computed using methods in \cite{shabbir2022computation}. The main idea of Algorithm~\ref{alg:backbone_distance}  is to maintain the edges such that for each $\mathcal{D}_i\in\mathcal{D}$, the distance between $f(\mathcal{D}_i)$ and the leader $\ell_{\pi(i)}\in V_\ell$ is preserved. Recall that $\pi(i)$ is the coordinate of $\mathcal{D}_i$ at which the PMI property is satisfied. The details are outlined below.

\begin{algorithm}[ht]
\caption{Computing distance-based backbone}
\label{alg:backbone_distance}
    \begin{algorithmic}[1]
    \renewcommand{\algorithmicrequire}{\textbf{Input:}}
    \renewcommand{\algorithmicensure}{\textbf{Output:}}
    \Require $G$, $V_\ell$, PMI sequence $\calD = [\calD_1 \;\; \calD_2 \;\cdots \; \calD_{\delta(G,V_\ell)}]$
    \Ensure Distance-based backbone $B_d = (V, E_{B_d})$
        \For {$i=1$ to $\delta(G, V_\ell)$}\\
        \;\;\; compute $f(\calD_i)$, \hspace{0.1in}\texttt{\% DLM of $\mathcal{D}_i$.}\\
        \;\;\;\;compute $d(\ell_{\pi(i)},f(\calD_i))$ \texttt{ \% distance between leader $\ell_{\pi(i)}$ and vertex $f(\calD_i)$.} \\
        \;\;\; $\calP_i \gets$ edges on the shortest path between $\ell_{\pi(i)}$ and $f(\calD_i)$. \hspace{0.1in}\texttt{\% If there are multiple, choose any shortest path.}
        \EndFor\\
        $ E_{B_d} \gets$  ${\bigcup}_i$  $\calP_i$
    \end{algorithmic}
\end{algorithm}

\emph{Example:} We illustrate Algorithm~\ref{alg:backbone_distance} using the example in Figure~\ref{fig:PMI_example} with the leader set $V_\ell = \{\ell_1,\ell_2\} = \{v_4,v_6\}$. A PMI of length $\delta(G,V_\ell)=6$ is given in \eqref{eqn:PMI_seq}. The corresponding DLMs are: $f(\mathcal{D}_1) = v_4$, $f(\mathcal{D}_2) = v_6$, $f(\mathcal{D}_3) = v_7$, $f(\mathcal{D}_4) = v_5$, $f(\mathcal{D}_5) = v_8$, and $f(\mathcal{D}_6) = v_3$. For $i=1$ (first iteration), we have $\mathcal{D}_1 = \scriptsize{\left[\begin{array}{cc}0\\2\end{array}\right]}$, $\pi(1) = 1$. So, edges in the shortest path between $f(\mathcal{D}_1)$ and $\ell_1$ needs to be preserved. Since  $f(\mathcal{D}_1) = v_4 = \ell_1$, $\mathcal{P}_1 = \emptyset$. Similarly, in the second iteration ($i = 2$), we get $\mathcal{P}_2 = \emptyset$. For $i = 3$ (third iteration), we have $\mathcal{D}_3 = \scriptsize{\left[\begin{array}{cc}1\\1\end{array}\right]}$, and $\pi(3) = 2$. So, we need to preserve edges that appear in the shortest path between $f(\mathcal{D}_3) = v_7$ and $\ell_{\pi(3)} = \ell_{2} = v_6$. Since $v_6$ and $v_7$ are adjacent, $\mathcal{P}_3 = \{(v_6,v_7)\}$. Continuing this way, we will get the set of edges $E_{B_d}$ in the distance-based backbone at the end of $\delta(G, V_\ell)=6$ iterations. Figure~\ref{fig:dis_highlight} illustrates the distance-based backbone.

\begin{figure}[ht]
    \centering
\includegraphics[scale=0.55]{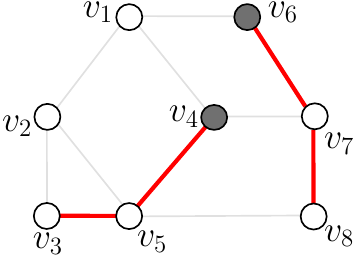}
    \caption{Distance-based backbone $B_d = (V,E_{B_d})$. Edges in $B_d$ are highlighted red.}
    \label{fig:dis_highlight}
\end{figure}
Next, we show that the above algorithm always returns a distance-based backbone.
\begin{theorem}
    \label{thm:distance}
    Consider a graph $G=(V,E)$ and a leader set $V_\ell\subseteq V$, where $|V| = n$ and $|V_\ell| = m$. Algorithm~\ref{alg:backbone_distance} returns a distance-based backbone $B_{d} = (V,E_{B_d})$ in $O(n^3)$ time.
\end{theorem}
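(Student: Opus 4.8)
The plan is to split the statement into two independent claims: first, that the output $B_d$ is genuinely a distance-based backbone, and second, that the procedure runs in $O(n^3)$ time. For the correctness part I would fix an arbitrary subgraph $\hat{G} = (V,\hat{E})$ with $E_{B_d} \subseteq \hat{E} \subseteq E$ and directly exhibit a PMI sequence of length $\delta(G,V_\ell)$ in $\hat{G}$; by definition of $\delta$ this immediately yields $\delta(\hat{G},V_\ell) \geq \delta(G,V_\ell)$, which is exactly the backbone property. The natural candidate sequence is the same ordered list of vertices $f(\mathcal{D}_1), f(\mathcal{D}_2), \ldots, f(\mathcal{D}_\delta)$ used by the algorithm, but now with all distances measured in $\hat{G}$, and I would try to verify the PMI property at the very same coordinates $\pi(i)$ that worked in $G$.

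The argument rests on two observations about distances. First, \emph{monotonicity}: since $\hat{G}\subseteq G$, deleting edges cannot shorten any path, so $d_{\hat{G}}(u,v) \geq d_G(u,v)$ for every pair $u,v$. Second, \emph{exact preservation along the chosen coordinate}: the algorithm inserts a shortest $\ell_{\pi(i)}$--$f(\mathcal{D}_i)$ path $\mathcal{P}_i$ into $E_{B_d}\subseteq\hat{E}$, so this path survives in $\hat{G}$ and gives $d_{\hat{G}}(\ell_{\pi(i)},f(\mathcal{D}_i)) \leq d_G(\ell_{\pi(i)},f(\mathcal{D}_i))$; combined with monotonicity this forces equality, i.e. $d_{\hat{G}}(\ell_{\pi(i)},f(\mathcal{D}_i)) = [\mathcal{D}_i]_{\pi(i)}$. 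With these in hand, for each $i$ and each $j>i$ I would chain
\[
d_{\hat{G}}(\ell_{\pi(i)},f(\mathcal{D}_i)) = [\mathcal{D}_i]_{\pi(i)} < [\mathcal{D}_j]_{\pi(i)} \leq d_{\hat{G}}(\ell_{\pi(i)},f(\mathcal{D}_j)),
\]
where the strict middle inequality is precisely the PMI property of $\mathcal{D}$ in $G$ and the last inequality is monotonicity. This is exactly the PMI condition at coordinate $\pi(i)$ for the distance-to-leader vectors of $f(\mathcal{D}_1),\ldots,f(\mathcal{D}_\delta)$ evaluated in $\hat{G}$ (these are genuine DL vectors of $\hat{G}$, so the sequence is admissible), hence that sequence is PMI of length $\delta(G,V_\ell)$.

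For the running time, I would note the loop executes $\delta(G,V_\ell)\leq n$ times. A single breadth-first search from each of the $m\leq n$ leaders costs $O(n^2)$ on a dense graph, for $O(mn^2)=O(n^3)$ total (equivalently one may precompute all-pairs distances in $O(n^3)$); these searches supply every distance $d(\ell_{\pi(i)},f(\mathcal{D}_i))$ and a shortest-path tree from which each $\mathcal{P}_i$ is read off in $O(n)$. Summing the path extractions and the final union $E_{B_d}\gets\bigcup_i \mathcal{P}_i$ over $O(n)$ iterations is dominated by the shortest-path computations, giving the stated $O(n^3)$ bound.

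I expect the correctness step to be the main obstacle, and the crux is conceptual rather than computational: one must see that it suffices to preserve a \emph{single} shortest path per vector, namely the one to $\ell_{\pi(i)}$, rather than the whole DL vector of $f(\mathcal{D}_i)$. The reason is that the PMI property of $\mathcal{D}_i$ constrains only coordinate $\pi(i)$, so I only need the left-hand distance to remain \emph{exactly} equal (guaranteed by the inserted path together with monotonicity), while the right-hand distances are permitted to grow --- and monotonicity delivers that growth for free. Getting this asymmetry right, and confirming that the fixed coordinates $\pi(i)$ and vertices $f(\mathcal{D}_i)$ remain valid witnesses in $\hat{G}$, is where the care is needed.
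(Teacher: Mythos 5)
Your proposal is correct and takes essentially the same approach as the paper's proof: preserve one shortest path per PMI vector so that the distance at coordinate $\pi(i)$ remains exactly $[\mathcal{D}_i]_{\pi(i)}$ in any supergraph of $B_d$, and invoke the fact that deleting edges can only increase distances to keep all remaining PMI inequalities intact, with the same BFS-based $O(n^3)$ accounting. If anything, your write-up is the more careful one, since the paper only verifies the PMI constraint between consecutive vectors $\mathcal{D}_i$ and $\mathcal{D}_{i+1}$, while you check the condition against all $j>i$ as the definition of a PMI sequence actually requires.
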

\begin{proof} 
We show that the backbone graph $B_d$ will have a PMI of the same maximum length as the PMI of the given graph $G$. Additionally, any subgraph $\hat{G}$ of $G$, where $B_d \subseteq \hat{G} \subseteq G$,
will have the longest length of the PMI sequence of size more or equal to the longest length PMI sequence of $G$. Let's assume $\calD$ is the PMI of the original graph $G$ and the DLM $f(\calD_i)$ for each of the DL vectors $\calD_i$ in $\calD$ are known. 
For any $1\leq i\leq \delta(G, V_\ell)-1$, let us consider the two consecutive DL vectors $\calD_i$ and $\calD_{i+1}$ for the given PMI $\calD$. To keep the distance vector $\calD_i$ in the PMI sequence, the constraint \eqref{eq:PMI} must be satisfied at $[\calD_i]_{\pi(i)}$. 
For $[\calD_i]_{\pi(i)}$, according to Algorithm \ref{alg:backbone_distance}, we maintain $d(f(\calD_i),\ell_{\pi(i)})$, the distance between vertex $f(\calD_i)$ and the leader $\ell_{\pi(i)}$ i.e. we preserve all the edges in the shortest path $\calP_i$ in the backbone graph $B_d$. Similarly, for $[D_{i+1}]_{\pi(i+1)}$, we will be maintaining $d(f(\calD_{i+1}),\ell_{\pi(i+1)})$, the distance between vertex $f(\calD_{i+1})$ and the leader $\ell_{\pi(i+1)}$ by preserving all the edges in $\calP_{i+1}$. While maintaining the respective distances for vertices $f(\calD_{i})$ and $f(\calD_{i+1})$, we can remove the rest of the edges that do not necessarily maintain these paths $\calP_i$ and $\calP_{i+1}$. The PMI constraint in \eqref{eq:PMI} states that the distance for vector $[\calD_{i+1}]_{\pi(i)}$ must strictly be greater than the distance between $f(\calD_{i})$ and $\ell_{i}$. As \emph{removing any edge can not decrease the distance between any two vertices}, the constraint \eqref{eq:PMI} is always satisfied for $\pi(i)$ between $\calD_i$ and $\calD_{i+1}$. The PMI constraint between $[D_{i}]_{\pi(i)}$, and $[D_{i+1}]_{\pi(i)}$ will not be violated $\forall i = 1,\cdots, \delta(G, V_\ell)-1$. Hence, as long as the paths $\calP_i$ and $\calP_{i+1}$ are preserved, $\delta(\hat{G}, V_\ell)$ will be preserved for any $\hat{G} \subseteq G$.  Thus, Algorithm \ref{alg:backbone_distance} will return a backbone $B_d$.
\newline
Its worst-case time complexity is $O(n \times|E|) = O(n^3)$ where we iterate over all the vertices and can use breadth-first search ($O(|E|)$) to find the shortest path for each vertex $f(\calD_i)$ in the PMI sequence.
\end{proof}

We note that for a given $G$ and leader set $V_\ell$, multiple distinct PMIs of the same length can exist. Moreover, there can be multiple distance-to-leader mappings for a given PMI $\mathcal{D}$. At the same time, for a given $\mathcal{D}$, there can be multiple ways to assign a coordinate $\pi(i)$ to $\mathcal{D}_i\in\mathcal{D}$ while satisfying the PMI property. Thus, for a given $G$ and $V_\ell$, multiple PMIs can have the same length $\delta(G, V_\ell)$ but distinct distance-to-leader mappings or $\pi(i)$ combinations. The distance-based backbone returned by Algorihm~\ref{alg:backbone_distance} depends on the above-mentioned factors, i.e., the input PMI sequence, $\pi(i)$ combinations in the PMI sequence, and the corresponding distance-to-leader mappings. Hence, the distance-based backbone returned by Algorihm~\ref{alg:backbone_distance} may be distinct for different PMI sequences with the same lentgh $\delta(G, V_\ell)$. We illustrate this in the below example, and then provide upper and lower bounds on the number of edges in the distance-based backbone computed by Algorithm~\ref{alg:backbone_distance}.

\emph{Example:} Consider the graph in Figure~\ref{fig:dist_bckbn}(a) with two leaders $V_\ell = \{v_1,v_2\}$ and the corresponding DL vectors of all the vertices. We can obtain two distinct PMI sequences, say $\tilde{\mathcal{D}}$ and $\breve{\mathcal{D}}$, each of length $\delta(G, V_\ell) = 4$. These sequences along with the coordinates satisfying the PMI property ($\pi(i)$) and the corresponding DL mappings are given below:

\begin{equation*}
\label{eqn:PMI_seq_1}
    \tilde{\mathcal{D}}= [D_{v_1}\;D_{v_2}\;D_{v_5}\;D_{v_3}]\\
    =\left[
        \begin{bmatrix}
            \encircle{0}\\1 \end{bmatrix},
        \begin{bmatrix}
            1\\\encircle{0} \end{bmatrix}, 
        \begin{bmatrix}
            \encircle{1}\\1 \end{bmatrix},
        \begin{bmatrix}
            2\\\encircle{1} \end{bmatrix}
    \right],
\end{equation*}

\begin{equation*}
\label{eqn:PMI_seq_2}
    \breve{\mathcal{D}} = [D_{v_1}\;D_{v_2}\;D_{v_4}\;D_{v_3}]
    =\left[
        \begin{bmatrix}
            \encircle{0}\\1 \end{bmatrix},
        \begin{bmatrix}
            1\\\encircle{0} \end{bmatrix}, 
        \begin{bmatrix}
            \encircle{1}\\2 \end{bmatrix},
        \begin{bmatrix}
            \encircle{2}\\1 \end{bmatrix}
    \right].
\end{equation*}

Though $\tilde{\mathcal{D}}$ and $\breve{\mathcal{D}}$ are of the same length, they are different in terms of the DL vectors they contain and the corresponding DL mappings. The distance-based backbones obtained by $\tilde{\mathcal{D}}$ and $\breve{\mathcal{D}}$ as a result of Algorithm~\ref{alg:backbone_distance} are illustrated in Figures~\ref{fig:dist_bckbn}(b) and (c), respectively. We note that the number of edges is different in the two backbones. We obtain a minimum distance-based backbone due to $\tilde{\mathcal{D}}$ consisting of $\delta(G, V_\ell) - |V_\ell| = 2$ edges.

\begin{figure}[ht]
    \centering
     \begin{subfigure}{0.3\linewidth}
     \centering
    \includegraphics[scale=0.5]{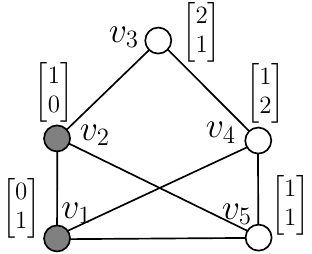}
         \caption{}
     \end{subfigure}
     \hfill
     \begin{subfigure}{0.3\linewidth}
     \centering
         \includegraphics[scale=0.5]{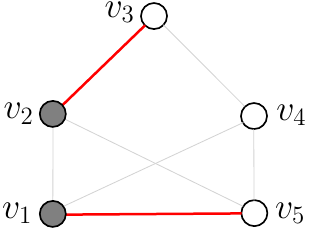}
         \caption{}
     \end{subfigure}
        \hfill
     \begin{subfigure}{0.3\linewidth}
     \centering
         \includegraphics[scale=0.5]{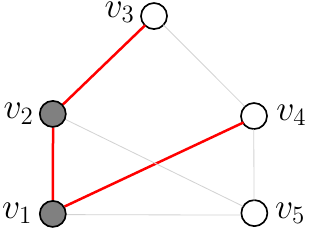}
         \caption{}
     \end{subfigure}
    \caption{(a) A graph $G$ with two leaders. (b) Minimum distance-based backbone due to $\tilde{\mathcal{D}}$. (c) Distance-based backbone as a result of $\breve{\mathcal{D}}$ as an input to Algorithm~\ref{alg:backbone_distance}.}
    \label{fig:dist_bckbn}
\end{figure}

Next, we state the lower and upper bounds on the number of edges in the distance backbone returned by Algorithm~\ref{alg:backbone_distance}.

\begin{prop}
\label{thm:distance_lowerbound}
For a given graph $G$ and a leader set $V_\ell$, the lower bound on the number of edges in the distance-based backbone $B_d$ computed using Algorithm \ref{alg:backbone_distance} is $\delta(G, V_\ell) - m$, where $m$ is the number of leaders. 
\end{prop}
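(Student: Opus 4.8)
The plan is to bound $|E_{B_d}|$ from below by a connected-components argument applied to the union of shortest paths produced by Algorithm~\ref{alg:backbone_distance}. Recall that $E_{B_d} = \bigcup_{i=1}^{\delta} \calP_i$, where $\calP_i$ is the shortest path between $\ell_{\pi(i)}$ and $f(\calD_i)$ and $\delta = \delta(G,V_\ell)$. The first step is to observe that the $\delta$ endpoint vertices $f(\calD_1), \ldots, f(\calD_\delta)$ are pairwise distinct: the vectors in a PMI sequence are themselves distinct, since if $\calD_i = \calD_j$ for $i<j$, the PMI property \eqref{eq:PMI} at coordinate $\pi(i)$ would force $[\calD_i]_{\pi(i)} < [\calD_j]_{\pi(i)} = [\calD_i]_{\pi(i)}$, a contradiction; and because every vertex has a unique distance-to-leader vector, distinct vectors $\calD_i$ map under $f$ to distinct vertices.

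Next I would let $W = \bigcup_{i=1}^\delta V(\calP_i)$ denote the set of all vertices lying on some path $\calP_i$ (including its endpoints), so that the subgraph on $W$ carrying the edges $E_{B_d}$ has exactly $|E_{B_d}|$ edges. Since the $\delta$ distinct vertices $f(\calD_i)$ all belong to $W$, we have $|W|\ge \delta$. The key structural claim is that every connected component of this subgraph contains at least one leader: any $w\in W$ lies on some path $\calP_i$ from $\ell_{\pi(i)}$ to $f(\calD_i)$, so $w$ is joined to the leader $\ell_{\pi(i)}$ through edges of $\calP_i\subseteq E_{B_d}$ (and in the degenerate case $\calP_i=\emptyset$, the vertex $w=f(\calD_i)=\ell_{\pi(i)}$ is itself a leader). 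Because distinct components are vertex-disjoint and each harbors a distinct leader, the number of components $c$ satisfies $c\le |V_\ell| = m$.

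Finally I would invoke the elementary fact that any graph on $|W|$ vertices with $c$ connected components has at least $|W|-c$ edges. Combining this with $|W|\ge\delta$ and $c\le m$ yields $|E_{B_d}| \ge |W| - c \ge \delta - m$, which is the claimed lower bound; the backbone of Figure~\ref{fig:dist_bckbn}(b), with exactly $\delta(G,V_\ell)-m = 2$ edges, shows that it is attained. I expect the main obstacle to be the bookkeeping around the degenerate (trivial) paths $\calP_i$ that arise when $f(\calD_i)$ is itself a leader: such a path contributes an endpoint to $W$ but no edge, so one must confirm that these vertices still fall inside leader-containing components and do not cause the component count to exceed $m$. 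Once that edge case is settled, the distinctness observation and the component-counting inequality together close the argument.
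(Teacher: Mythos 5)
Your proof is correct, and at its core it rests on the same idea as the paper's own argument: every vertex $f(\calD_i)$ appearing in the PMI sequence must be joined to some leader through edges of $E_{B_d}$, and counting then forces $|E_{B_d}|\ge \delta(G,V_\ell)-m$. Where you differ is in how that counting is carried out. The paper's proof is a short contradiction sketch --- it asserts that having fewer than $\delta(G,V_\ell)-m$ edges would leave some PMI vertex with no path to any leader --- but it never justifies that implication, nor does it record that the vertices $f(\calD_1),\dots,f(\calD_{\delta})$ are pairwise distinct. Your proposal supplies exactly these missing pieces: distinctness of the $f(\calD_i)$ (from distinctness of the PMI vectors together with each vertex having a single well-defined DL vector), the observation that every connected component of the subgraph $(W,E_{B_d})$ contains a leader so that the number of components satisfies $c\le m$, and the elementary bound that a graph on $|W|$ vertices with $c$ components has at least $|W|-c$ edges, which chains into $|E_{B_d}|\ge |W|-c\ge \delta(G,V_\ell)-m$. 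Your treatment of the degenerate case $\calP_i=\emptyset$ (a leader contributing a vertex but no edge, sitting in a leader-containing component) is also handled correctly, and your tightness remark via Figure~\ref{fig:dist_bckbn}(b) matches the paper's example. In short, you follow the paper's route but actually complete it: the paper's version buys brevity at the cost of a real gap in the counting step, while your component-counting argument is the rigorous form of the same proof.
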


\begin{proof}
    Let $\mathcal{D} = [\calD_1\;\calD_2\;\cdots\;\calD_{\delta(G, V_\ell)}]$ be an input PMI sequence in Algorithm~\ref{alg:backbone_distance}. If a vertex $f(\calD_i)$ is a part of the PMI sequence $\calD$, it must have a path to at least one leader $\ell_\pi(i)$. Let the number of edges in $B_d$ is less than $\delta(G, V_\ell) - m$, then there must be one or more vertex that does not have a path to any leader vertex and can not be a part of the longest PMI sequence. So, the number of edges can not be less than the number of distance vectors of the follower vertices in the PMI sequence i.e. $\delta(G, V_\ell) - m$. 
\end{proof}

Figure~\ref{fig:dist_bckbn}(b) illustrates an example of a minimum distance backbone. An upper bound on the number of edges in the distance backbone due to Algorithm~\ref{alg:backbone_distance} is given below.

\begin{prop}
\label{thm:distance_upperbound}
For a given graph $G$, a leader set $V_\ell$, and a maximum length PMI $\calD$, the upper bound on the number of edges in $B_d$ computed using Algorithm \ref{alg:backbone_distance} is $\dbinom{m}{2} + \dbinom{\delta(G, V_\ell) - m+1}{2}$.
\end{prop}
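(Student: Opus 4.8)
The plan is to bound $|E_{B_d}|$ by adding the paths $\calP_1,\dots,\calP_{\delta}$ in the order of the given maximum-length PMI sequence and charging only the edges that each path contributes for the first time. Write $\delta=\delta(G,V_\ell)$ and let $N_i$ denote the number of edges of $\calP_i$ that do not already lie in $\calP_1\cup\dots\cup\calP_{i-1}$. Since $E_{B_d}=\bigcup_{i=1}^{\delta}\calP_i$, we have $|E_{B_d}|\le\sum_{i=1}^{\delta}N_i$, so the whole claim reduces to bounding this sum of new-edge counts.

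The key is the per-position estimate $N_i\le i-1$ for all $i$, sharpened to $N_i\le i-m$ once $i>m$. Fix $i$ and set $c=\pi(i)$, so $\calP_i$ is a shortest path from $\ell_c$ to $f(\calD_i)$ of length $b_i=[\calD_i]_{\pi(i)}$, meeting exactly one vertex at each distance $0,1,\dots,b_i$ from $\ell_c$. The PMI property \eqref{eq:PMI} at coordinate $c$ says every later vector satisfies $[\calD_j]_{c}>b_i$ for $j>i$; equivalently, every PMI vertex lying within distance $b_i$ of $\ell_c$ has already appeared by position $i$. Maximality of $\calD$ adds that no distance level below $b_i$ may be skipped: were some vertex of $\calP_i$ at distance $t<b_i$ from $\ell_c$ not yet covered, its distance-to-leader vector could be spliced into $\calD$ at coordinate $c$ with value $t$, producing a strictly longer PMI and contradicting $|\calD|=\delta$. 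These two facts confine the new edges of $\calP_i$ to its outermost portion and let one charge them injectively to earlier positions, giving $N_i\le i-1$; the improvement to $N_i\le i-m$ for $i>m$ reflects that the $m$ coordinates are anchored by the $m$ leaders (each at self-distance $0$), whose low-distance structure is already present and therefore contributes nothing new to a deep path.

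Summing the two regimes gives exactly the stated bound:
\begin{equation*}
|E_{B_d}|\le\sum_{i=1}^{m}(i-1)+\sum_{i=m+1}^{\delta}(i-m)=\binom{m}{2}+\binom{\delta-m+1}{2},
\end{equation*}
since the first sum is $0+1+\dots+(m-1)=\binom{m}{2}$ and the second is $1+2+\dots+(\delta-m)=\binom{\delta-m+1}{2}$. Tightness is already visible for $m=1$, where a spider with leg lengths $1,2,\dots,\delta-1$ forces Algorithm~\ref{alg:backbone_distance} to keep $\binom{\delta}{2}$ edge-disjoint edges; analogous configurations, combining a small structure among the leaders with such a spider of followers, make the general bound tight.

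The step I expect to be the main obstacle is the per-position inequality, and specifically the sharpening to $N_i\le i-m$. Two points need care. First, the maximality/splicing argument must be made rigorous: inserting the missing intermediate vertex has to be shown not to violate the PMI property of any earlier entry at \emph{its own} coordinate, which is exactly where the interaction between different coordinates enters. Second, shortest paths emanating from different leaders can overlap in ways that are awkward to track; fixing one breadth-first-search tree per leader and routing all of that leader's paths inside it makes each leader's contribution a growing subtree and turns the overlap accounting into a level-by-level count, which I expect to be the cleanest route to the exact constant.
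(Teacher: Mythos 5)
Your decomposition $|E_{B_d}|\le\sum_i N_i$ and the closing summation are fine, but the proof has a genuine gap: the two per-position inequalities $N_i\le i-1$ and $N_i\le i-m$ for $i>m$, which carry the entire argument, are never actually established --- you yourself defer them as ``the main obstacle,'' and a proof whose central step is an acknowledged obstacle is not a proof. Three concrete problems block the sketch. (i) The splicing step is not a technicality: an inserted intermediate vector $D_{u_t}$ can be blocked by an earlier entry $\calD_j$ for which \emph{every} coordinate $c'$ that is valid for $\calD_j$ satisfies $[D_{u_t}]_{c'}\le[\calD_j]_{c'}$; ruling out such blockers, across ties at coordinate $\pi(i)$ and across the other leaders' coordinates, is exactly the hard part, and no argument is given. (ii) The sharpening to $N_i\le i-m$ rests on the leaders occupying the first $m$ positions. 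That is false for an arbitrary maximum-length PMI: on the $4$-vertex path with leaders $\ell_1,\ell_2$ at the two ends and followers $a,b$ in the middle, the sequence $[D_{\ell_1}\,D_a\,D_b\,D_{\ell_2}]$ with every $\pi(i)=1$ is maximum, places $D_{\ell_2}$ last, and lets Algorithm~\ref{alg:backbone_distance} build the length-$3$ path from $\ell_1$ to $\ell_2$ at a \emph{leader} position, whose last edge is new. So the premise of your justification fails; what is needed (and missing) is the normalization that each leader position may be re-assigned its own coordinate, where its value is $0$ and its path empty. (iii) Your charging conflates two notions of ``covered'': that distance level $t<[\calD_i]_{\pi(i)}$ is occupied by some earlier PMI vector does not imply the level-$t$ edge of $\calP_i$ lies in an earlier path, because that earlier vector may belong to a different vertex sharing the same DL vector, may satisfy the PMI property at a different coordinate (so its path runs to a different leader), or may use a different shortest-path branch toward the same leader. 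Your proposed remedy --- routing all paths inside one BFS tree per leader --- proves a bound only for a restricted variant of the algorithm, whereas Algorithm~\ref{alg:backbone_distance} as stated permits any shortest-path choice.

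For contrast, the paper's proof avoids overlap accounting altogether: it bounds the \emph{full} length $|\calP_i|=[\calD_i]_{\pi(i)}$ of each path by appealing to the candidate-set recursion and Proposition 4.2 of \cite{yaziciouglu2016graph} (i.e., properties of the longest-PMI construction, rather than a from-scratch insertion lemma for arbitrary maximum PMIs), takes leader positions to contribute no edges, sums $1+2+\cdots+(\delta(G,V_\ell)-m)=\binom{\delta(G,V_\ell)-m+1}{2}$ over follower positions, and adds $\binom{m}{2}$ for possible leader-to-leader edges. Your new-edge decomposition is a genuinely different and potentially tighter route, and your $m=1$ spider example for tightness is a nice addition the paper lacks, but until the inequalities in (i)--(iii) are proven the proposal does not establish the proposition.
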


\begin{proof}
    Let $S = \{D_1, D_2\cdots,D_n\}$ be the set of all DL vectors for a given leader-follower network and let $C_p \subseteq S$ be the set of all DL vectors that can be assigned as the $p^{th}$ element of a PMI sequence $\calD$. Once a vector from $C_p$ is assigned as the $p^{th}$ element of the sequence, $D_p$, and an index $\pi(p)$ satisfying \eqref{eq:PMI} is chosen, the resulting $C_{p+1}$ can be obtained from $C_p$ as
    \begin{equation}
    \label{eq:C_set}
        C_{p+1} = \{ D_i \in C_p | [D_i]_{\pi(i)} > [\calD_p]_{\pi(i)}\}
    \end{equation}
    The \emph{Proposition 4.2} of \cite{yaziciouglu2016graph} states that \eqref{eq:C_set} can be used to find the longest length PMI sequence using Algorithm $I$ of \cite{yaziciouglu2016graph}. We know that $C_p$ is the multiset, where each element $C_{p,i}$ is the $C_p$ resulting from \eqref{eq:C_set} for specific choices of $\pi(1), \cdots, \pi(p-1)$ subject to the corresponding PMI sequences satisfying 
    \begin{equation*}
        [\calD_{p}]_{\pi(p)} = \min_{D_i \in C_p} [D_{i}]_{\pi(p)}.
    \end{equation*}
    We know that there must be a potential candidate $\calD_p$ for each $p \leq \delta(G, V_\ell)$ until $C_p = \phi$. This shows that each distance vector $\calD_i$ can not have a distance more than $i$ at $\pi(i)$ which can result in at most $\dbinom{\delta(G, V_\ell)}{2}$ unique edges in the backbone. 
    \newline
    We know that the distance vectors of the leader vertices will always be a part of the longest PMI sequence since the distance of each leader to itself (zero) satisfies the PMI rule. Hence, we essentially don't need to have any edges for all the distance vectors belonging to the leader vertices and we only have positive distances for the follower vertices. Hence, we will only need $\dbinom{\delta(G, V_\ell) - m+1}{2}$ edges between leader-to-follower and/or follower-to-follower vertices. However, if we need to maintain a path $\calP_i$ from a leader $\ell_x$ to a follower $f(\calD_i)$ and the path $\calP_i$ has another leader $\ell_w$ directly connected to $\ell_x$, then we will have to keep all the $\dbinom{m}{2}$ edges among the leaders as well. 
\end{proof}

In Section \ref{sec:comp}, we perform a numerical analysis and observe that the number of edges in the distance-based backbone is typically very close to the lower bound. Further, by carefully selecting the PMI sequence of the desired length, the number of edges in the distance-based backbone due to Algorithm~\ref{alg:backbone_distance} can be minimized. Next, we compare the ZFS- and distance-based backbones numerically in graphs.

\section{Comparison and Illustration}
\label{sec:comp}
The ZFS- and distance-based bounds have their own merits in computing the controllability backbones, and the number of edges in the ZFS- and distance-based backbones depends on the values of the respective bounds (as in Theorem~\ref{thm:zfs_backbone} for the ZFS backbone, and Propositions~\ref{thm:distance_lowerbound} and \ref{thm:distance_upperbound} for the distance-based backbones). As discussed previously, the ZFS bound works best when the leader set $V_\ell$ is a ZFS, implying that the network is strong structurally controllable. In such a scenario, the ZFS bound on the dimension of SSC is at least as good as the distance bound, i.e., $\zeta(G,V_\ell)\ge \delta(G,V_\ell)$~\cite{yaziciouglu2022strong}. Moreover, the number of edges in the ZFS-based backbone obtained using Algorithm~\ref{alg:backbone_zfs} will be minimum. Thus, the ZFS-based backbone is a better choice than the distance-based backbone. As an example, consider $G=(V,E)$ in Figure~\ref{fig:comp_example}(a). The leader set is $V_\ell = \{v_1,v_2,v_3\}$, which is also a ZFS. The ZFS bound is $\zeta(G,V_\ell) = 8$, which is also the dimension of SSC $\gamma(G,V_\ell)$. As for the distance-based bound, the length of the longest PMI sequence is 7, hence $\delta(G,V_\ell) = 7 < \zeta(G,V_\ell)$. Figures~\ref{fig:comp_example} (b) and (c) illustrate the resulting ZFS- and distance-based backbones, respectively. 
\begin{figure}[ht]
    \centering
     \begin{subfigure}{0.3\linewidth}
     \centering
    \includegraphics[scale=0.5]{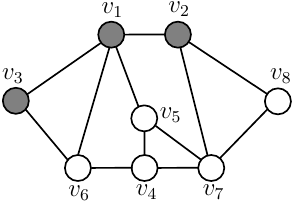}
         \caption{$G = (V,E)$}
     \end{subfigure}
             \hfill
     \begin{subfigure}{0.3\linewidth}
         \centering
\includegraphics[scale=0.5]{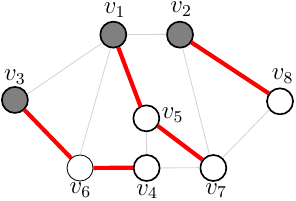}
         \caption{$B_z = (V,E_{B_z})$}
     \end{subfigure}
        \hfill
     \begin{subfigure}{0.3\linewidth}
   \centering      \includegraphics[scale=0.5]{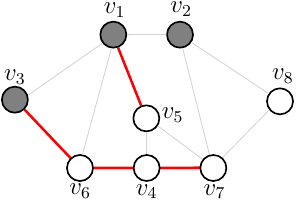}
         \caption{$B_d = (V,E_{B_d})$}
     \end{subfigure}
    \caption{(a) A graph $G$ with leaders constituting a ZFS. (b)~A ZFS-based backbone, and (c) a distance-based backbone.}
    \label{fig:comp_example}
\end{figure}
On the other hand, if the leader set is not a ZFS, the distance-based bound is typically superior to the ZFS bound \cite{yaziciouglu2022strong}. Hence, in such cases, the distance-based backbone is better than the ZFS backbone as the subgraphs that include the distance-based backbone will have higher controllability (i.e., greater dimension of SSC) than those containing the ZFS backbone. For instance, consider the same $G$ as in Figure~\ref{fig:comp_example}(a), but with a leader set $V_\ell = \{v_1,v_2\}$. Here, the ZFS bound is $\zeta(G,V_\ell) = 2$, whereas the distance bound is $\delta(G,V_\ell) = 5$. Hence, the distance-based backbone is preferable to the ZFS-based backbone since the subgraphs including the distance-based backbone have a guaranteed dimension of SSC of at least $5$. The ZFS- and distance-based backbones with $V_\ell = \{v_1,v_2\}$ are shown in Figures~\ref{fig:comp_example_2}, respectively. Note that the edge set in the ZFS-based backbone is empty. 
\begin{figure}[ht]
    \centering
     \begin{subfigure}{0.49\linewidth}
       \centering
\includegraphics[scale=0.5]{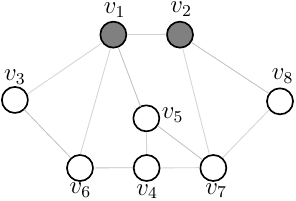}
         \caption{$B_z = (V,E_{B_z})$}
     \end{subfigure}
        \hfill
     \begin{subfigure}{0.49\linewidth}
       \centering
         \includegraphics[scale=0.5]{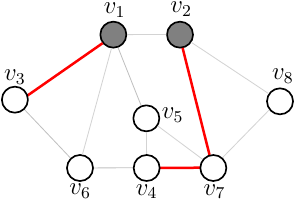}
         \caption{$B_d = (V,E_{B_d})$}
     \end{subfigure}
    \caption{(a) A ZFS-based backbone, and (b) a distance-based backbone with leaders $V_\ell = \{v_1,v_2\}$.}
    \label{fig:comp_example_2}
\end{figure}

Finally, we perform a numerical evaluation of the ZFS and distance backbones on Erd\H{o}s-R\'enyi (ER) graphs with $n = 50$ vertices and varying average density $p$. For each graph $G$, we randomly select $m = 12$ leader vertices and find a derived set and maximum length PMI sequence for the selected leader set. We use Algorithms \ref{alg:backbone_zfs} and \ref{alg:backbone_distance} to find the respective backbone graphs for $50,000$ different ER graphs for each value of $p$. Figure \ref{fig:simulation}(a) plots the average value of the ZFS and distance bounds on the dimension of SSC as a function of $p$. We observe that the distance bound $\delta(G)$ is significantly better than the ZFS bound $\zeta(G)$. Figure~\ref{fig:simulation}(b) plots the number of edges $E_{B_z}$ in the ZFS-based backbone (computed using Algorithm~\ref{alg:backbone_zfs}) as a function of $p$. Similarly, Figure~\ref{fig:simulation}(c) plots the the number of edges $E_{B_d}$ in the distance-based backbone (using Algorithm~\ref{alg:backbone_zfs}). The plot also shows the lower and upper bounds on the number of edges in the distance-based backbone as described in Propositions~\ref{thm:distance_lowerbound} and ~\ref{thm:distance_upperbound}, respectively. We observe that the number of edges in distance-based backbones is much closer to the lower bound, $\delta(G) - m$. For instance, for $p = 0.1$, the upper bound on $E_{B_d}$ is $276$ compared to the lower bound value of 23. However, the actual value of  $|E_{B_d}| \approx 29$, is much closer to the lower bound.

\vspace{-0.12in}
\begin{figure}[ht]
    \centering
     \begin{subfigure}{0.50\linewidth}
       \centering
    \includegraphics[scale=0.19]{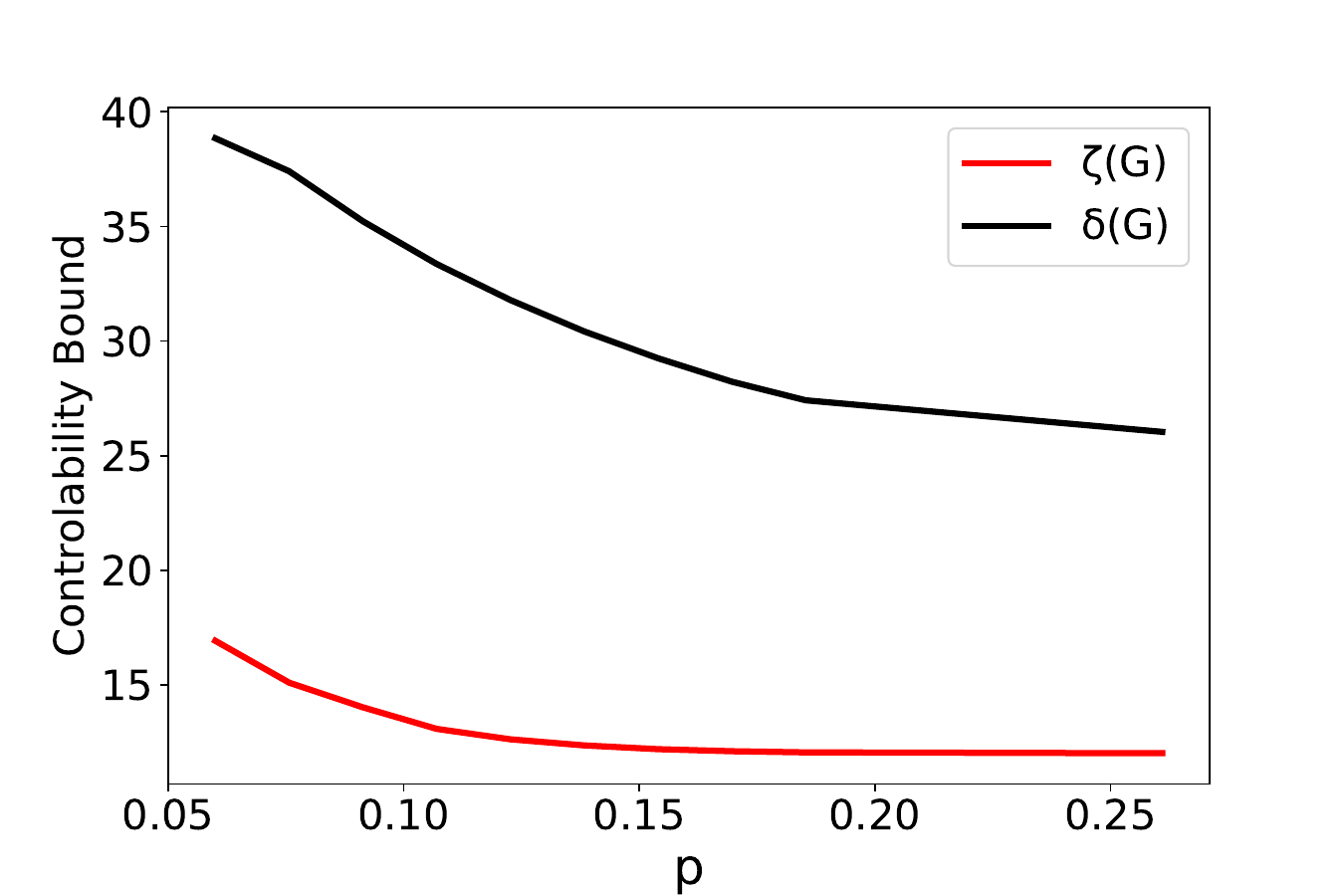}
         \caption{}
     \end{subfigure}\\
     \begin{subfigure}{0.49\linewidth}
       \centering
         \includegraphics[scale=0.18]{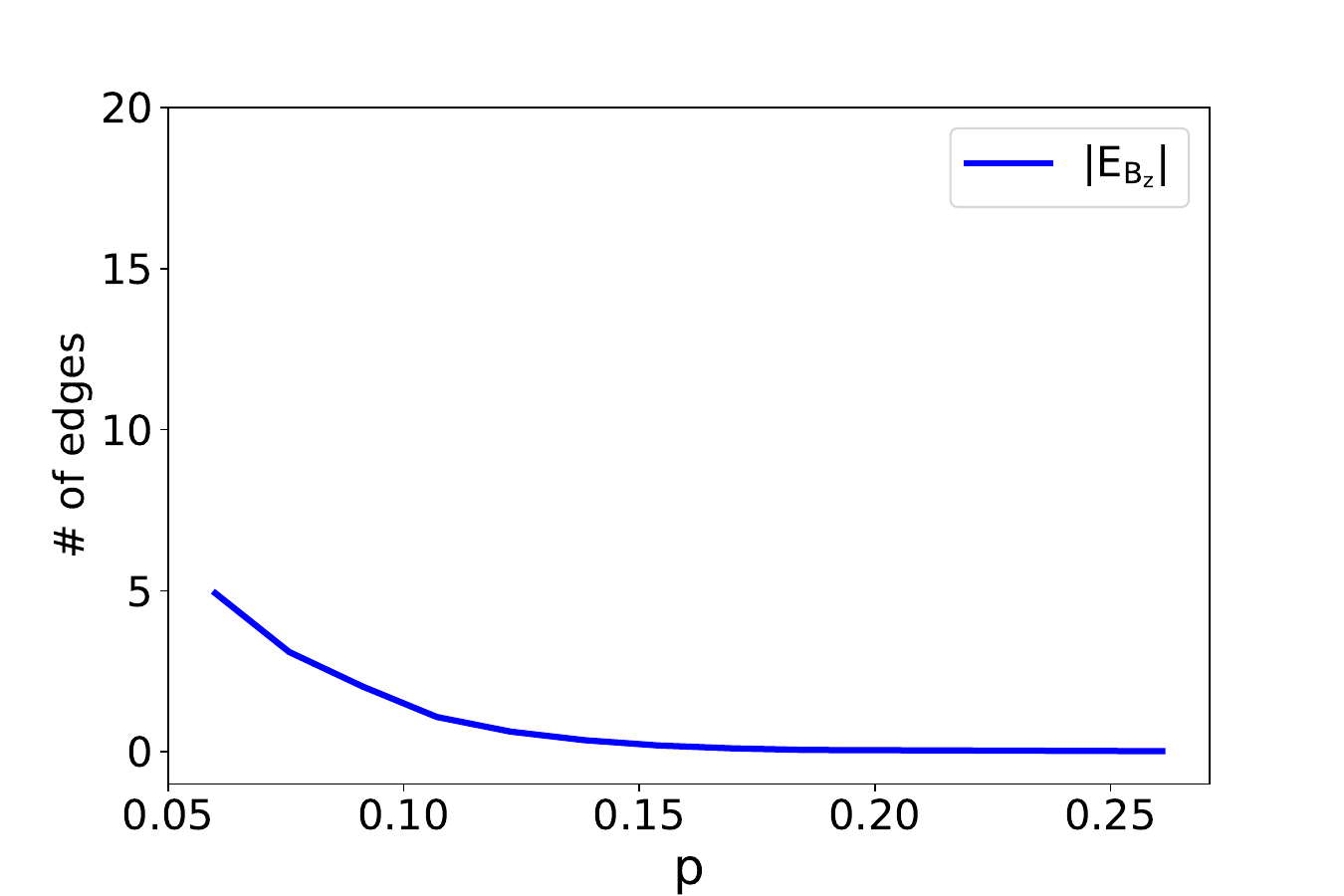}
         \caption{}
     \end{subfigure}
        \hfill
     \begin{subfigure}{0.49\linewidth}
       \centering
         \includegraphics[scale=0.18]{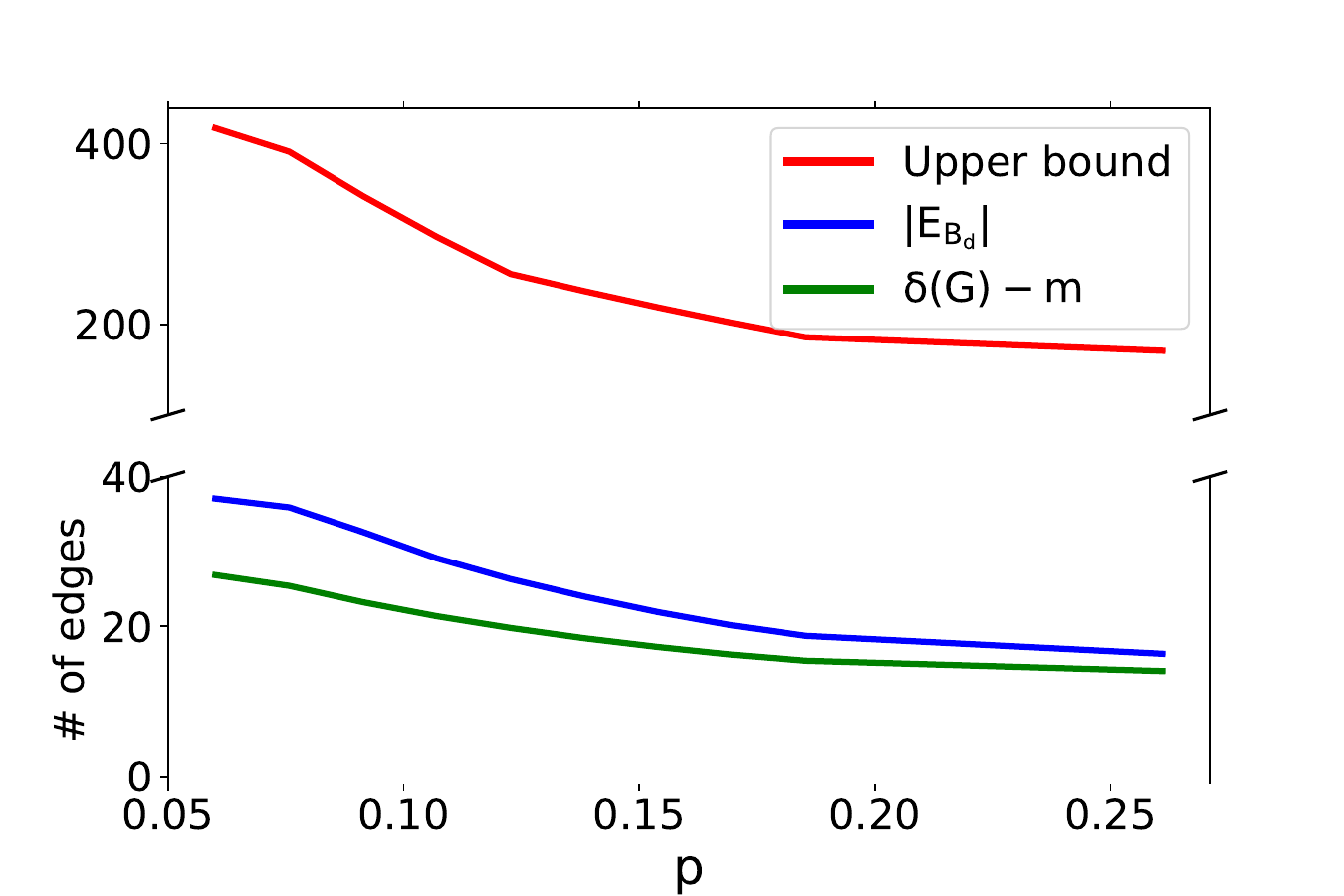}
         \caption{}
     \end{subfigure}
    \caption{(a) Comparison of the ZFS and distance bounds on SSC. (b) Number of edges in the ZFS-based backbones. (c)~Number of edges in the distance-based backbones.}
    \label{fig:simulation}
\end{figure}

\vspace{-0.15in}
\section{Conclusion}
\label{sec:conclusion}
The controllability of a network can be compromised due to changes in the network's connections. To address this issue, we proposed a method to identify a subset of edges, referred to as the "backbone edges," that are crucial for preserving the minimum network controllability even when the connections in the network are perturbed. Specifically, we designed an algorithm to compute an optimal backbone (containing the minimum number of edges) when the leader set is a zero forcing set. Moreover, we presented an algorithm utilizing the distance-based bound on the network SSC to identify a controllability backbone when the leader set is not a zero forcing set. Finally, we conducted numerical evaluations on random graphs to demonstrate the effectiveness of each algorithm. As a future direction, we plan to explore the possibility of backbone identification in networks that preserve other network properties, such as energy-based controllability parameters.

\bibliographystyle{IEEEtran}
\bibliography{refer}

\end{document}